\pdfoutput=1
\RequirePackage{amsmath}
\documentclass[a4paper,12pt,bold]{scrartcl}

\normalsize

\newcommand{\thin}{\thinspace}

	%Normal Distribution
	%Uniform Distribution
	%Dirichlet Distribution
	%Wishart Distribution
\newcommand{\E}{\mathrm{E}}		%Expectation
	%Identity Matrix
	%Indicator Function
\newcommand{\R}{\mathbb{R}}

\newcommand{\A}{\mathcal{A}}
\newcommand{\Pc}{\mathcal{P}}
\newcommand{\ST}{\mathcal{S}}		%Expectation
\newcommand{\C}{\mathcal{C}}		%Expectation

\usepackage{bbm}
\renewcommand{\vec}[1]{\mathbf{#1}}

\parindent0pt
\usepackage{algpseudocode,tabularx,ragged2e}
\newcolumntype{C}{>{\centering\arraybackslash}X} % centered "X" column
\newcolumntype{L}{>{\arraybackslash}X} % centered "X" column

\usepackage{algorithmicx}
\usepackage{graphicx}
\usepackage{afterpage}
\usepackage{algorithm}
\usepackage{subfig}
\usepackage{float}
\usepackage[section]{placeins}
\usepackage{apacite}
\usepackage{booktabs}
\usepackage{epigraph}
\usepackage[sans]{dsfont}
\usepackage[round,longnamesfirst]{natbib}
\usepackage{bm}																									%matrix symbol
\usepackage{setspace}																						%Fu�noten (allgm.
\usepackage[colorlinks = true,
            linkcolor = blue,
            urlcolor  = blue,
            citecolor = blue,
            anchorcolor = blue]{hyperref}%Zeilenabst�nde)
\usepackage{threeparttable}
\usepackage{lscape}																							%Querformat
\usepackage[latin1]{inputenc}
													%Umlaute
\usepackage{graphicx}
\usepackage{amsmath}
\usepackage{amsthm}
\usepackage{mathtools}
\usepackage{amssymb}
\usepackage{bigints}
\usepackage{fancybox}																						%Boxen und Rahmen
\usepackage{appendix}
\usepackage{listings}
\usepackage{xr}
\usepackage{pdflscape}

\usepackage{enumerate}
\usepackage[labelfont=bf]{caption}
																		%EURO Symbol
\usepackage{tabularx}
\usepackage{longtable}
\usepackage{float}																				%Mehrseitige Tabellen
\usepackage{color,colortbl}																			%Farbige Tabellen
\usepackage[left=2cm, right=2cm, top=2cm, bottom=2.5cm]{geometry} %Seitenr�nder

\definecolor{lightgrey}{gray}{0.95}	%Farben mischen
\definecolor{grey}{gray}{0.85}
\definecolor{darkgrey}{gray}{0.80}

\usepackage{rotating}

\usepackage{tikz}
\usetikzlibrary{calc,trees,positioning,arrows,fit,shapes,calc}
\usepackage[export]{adjustbox}
\usepackage{authblk}
\usepackage{caption}
% \captionsetup[figure]{labelfont=bf}

\usepackage{url}  % Used for linebreaks in verbatim statements
\usepackage{multirow}

\newtheorem{Theorem}{Theorem}
\newtheorem{Assumption}{Assumption}

\setlength{\skip\footins}{1.0cm}
\deffootnote[1em]{1.1em}{0em}{\textsuperscript{\thefootnotemark}}
\renewcommand{\arraystretch}{1.05}

\DeclareMathOperator*{\argmin}{arg\,min}
\DeclareMathOperator*{\argmax}{arg\,max}

\makeatletter

\defcitealias{Rust.1987}{Rust's (1987)}

\setkomafont{author}{\scshape}
\usepackage{blindtext}
\numberwithin{equation}{section}
\graphicspath{{material/}{material/decision-theory-overview/}}

% !TEX root = ../main.tex
\title{Robust decision-making under risk and ambiguity\thanks{Corresponding author: Philipp Eisenhauer, peisenha@uni-bonn.de. Philipp Eisenhauer is funded by the Deutsche Forschungsgemeinschaft (DFG, German Research Foundation) under Germany's Excellence Strategy - EXC 2126/1- 390838866, the TRA Modelling (University of Bonn) as part of the Excellence Strategy of the federal and state governments, and a postdoctoral fellowship by the AXA Research Fund. We thank Anton Bovier, Annica Gehlen, Lena Janys, Ken Judd, John Kennan, Simon Merkt, Sven Rady, Gregor Reich, John Rust, J{\"o}rg Stoye, and Rafael Suchy for numerous helpful discussions. We thank Annica Gehlen for her outstanding research assistance. We are grateful to the Social Sciences Computing Service (SSCS) at the University of Chicago for the permission to use their computational resources. We gratefully acknowledge support by the AXA Research Fund.}}

\author[1]{Maximilian Blesch}
\author[2]{Philipp Eisenhauer}
\affil[1]{Berlin School of Economics}
\affil[2]{University of Bonn}

\date{\today}

\begin{document}

% !TEX root = ../main.tex
\maketitle

\vspace{0.5cm}\begin{abstract}\noindent
Economists often estimate economic models on data and use the point estimates as a stand-in for the truth when studying the model's implications for optimal decision-making. This practice ignores model ambiguity, exposes the decision problem to misspecification, and ultimately leads to post-decision disappointment. Using statistical decision theory, we develop a framework to explore, evaluate, and optimize robust decision rules that explicitly account for estimation uncertainty. We show how to operationalize our analysis by studying robust decisions in a stochastic dynamic investment model in which a decision-maker directly accounts for uncertainty in the model's transition dynamics.\\
\end{abstract}

\noindent\begin{tabular}{@{\hspace{0.5cm}}ll}
\textbf{JEL Codes} & D81, C44, D25\\
\textbf{Keywords}  & decision-making under uncertainty, robust Markov decision process
\end{tabular}

\setcounter{page}{1}
\thispagestyle{empty}

\newpage
\tableofcontents
\newpage

% !TEX root = ../main.tex
%---------------------------------------------------------------------------------------------------
\FloatBarrier\section{Introduction}
%---------------------------------------------------------------------------------------------------
Decision-makers often confront uncertainties when determining their course of action. For example, individuals save to cover uncertain medical expenses in old age \citep{French.2014}. Firms set prices in an uncertain competitive environment \citep{Ilut.2020}, and policy-makers face uncertainties about future costs and benefits when voting on climate change mitigation efforts \citep{Barnett.2020}. We consider the situation in which a decision-maker posits a collection of economic models to inform his decision-making process. Each model formalizes the relevant objectives and trade-offs involved and provides an implicit rule for optimal decisions. Uncertainty is limited to risk for a given model, as the model induces a unique probability distribution over possible future outcomes. However, a decision-maker also faces model ambiguity as the true model within the collection remains uncertain \citep{Arrow.1951,Knight.1921}.\\

It is the standard practice in economics to estimate models on data and use the point estimates as a stand-in for the truth when studying the model's implications and optimal decision-making.\footnote{See examples in labor economics \citep{Adda.2017,Blundell.2012}, industrial organization \citep{Hortacsu.2019,Igami.2017}, and international trade \citep{Bagwell.2021,Eaton.2011}.} This approach ignores model ambiguity, resulting from the remaining parametric uncertainty after the estimation, and opens the door for the misspecification of the decision problem. As-if decisions, decisions that are optimal if the point estimates used to inform decisions are correct \citep{Manski.2021}, often turn out to be very sensitive to misspecification \citep{Smith.2006}. This danger creates the need for robust decisions that perform well over a whole range of different models instead of as-if decisions that perform best for one particular model.  However, increasing the robustness of decisions, often measured by a performance guarantee under a worst-case scenario, reduces performance in all other cases. Striking a balance between the two objectives is challenging.\\

We solve this trade-off and determine the optimal level of robustness by combining insights from statistical decision theory \citep{Berger.2010} with data-driven robust optimization \citep{Bertsimas.2018}. A core concept in statistical decision theory is a statistical decision function (SDF) that provides a procedure to map all available data into decisions. At the same time, the literature on data-driven robust optimization provides us with precisely such procedures for decision-making with varying levels of robustness against misspecification of the decision problem. Our main contribution is interpreting these procedures as SDFs and evaluating their performance with the toolkit of statistical decision theory. This insight allows us to systematically determine the optimal level of robustness. In doing so, we bring together and extend research in economics and operations research by using econometric models in complex decision problems \citep{Bertsimas.2006,Manski.2021}.\\

In our application, we revisit \citetalias{Rust.1987} seminal bus replacement problem. Model ambiguity is particularly consequential in dynamic models where the impact of erroneous decisions accumulates over time \citep{Mannor.2007}.\footnote{\citetalias{Rust.1987} model serves as a computational illustration in a variety of settings. See for example \citet{Christensen.2019}, \citet{Iskhakov.2016}, \citet{Reich.2018}, and \citet{Su.2012a}.} In the model, the manager Harold Zurcher implements a maintenance plan for a fleet of buses that maximizes his expected discounted utility. He faces uncertainty about the future mileage utilization of the buses but has data on past utilization available to inform his decisions. While \citetalias{Rust.1987} original goal was to describe the investment behavior of Harold Zurcher, our analysis is normative. We are interested in how a generic decision-maker should make decisions in this instance.\\

The bus replacement problem is typically modeled as a standard Markov decision problem (MDP), and the point estimates for the mileage utilization are treated as-if they correspond to the true parameters. The solution of the MDP is an as-if decision rule that is optimal given the estimates. This approach ignores model ambiguity. From the perspective of statistical decision theory, an MDP is just one particular example of an SDF suitable for analyzing the bus replacement problem. We, on the other hand, consider a whole class of SDFs called robust Markov decision problems (RMDP) \citep{Ben-Tal.2009}. RMDPs generalize the standard MDP, as they consider a whole set of distributions for the transition dynamics collected in an ambiguity set. The solution of an RMDP is a robust decision rule that is optimal under a worst-case scenario for all mileage utilization distributions inside the ambiguity set. We follow the literature and construct the ambiguity set so that it contains all distributions we cannot reject with a certain level of confidence $\omega \in [0, 1]$ around the point estimates under any possible realization of the data \citep{Ben-Tal.2013}. The size of the ambiguity set is a choice by the decision-maker and determines the level of robustness. Given the realization of the data, the robust decision rule based on the solution of an RMDP is always conditional on the specified level of robustness. Each choice of $\omega$ defines a different RMDP, and applying the toolkit of statistical decision theory allows us to determine the optimal level of robustness $\omega^*$ within the whole class of SDFs.\\

To do so, we compare the performance of RMDPs with varying levels of robustness under different decision-theoretic criteria. We consider the situation before any data on mileage utilization is available and implement an ex-ante decision-theoretic analysis. We explore the performance of robust decision rules over the whole probability simplex and are thus able to determine the optimal level of robustness. Throughout, we compare robust and as-if decision rules, as the standard Markov decision problem remains one SDF within the broader class we consider.\\

Figure \ref{RMDPs as statistical decision functions} stresses the point that each RMDP is a different SDF that characterizes robust decisions for any realization of the data. Here, for example, we consider two RMPDs with different levels of robustness -- $\omega_1$ and $\omega_2$ -- that, once data realizes, lead to different decision rules. This situation creates the need to compare their performance under alternative decision-theoretic criteria.
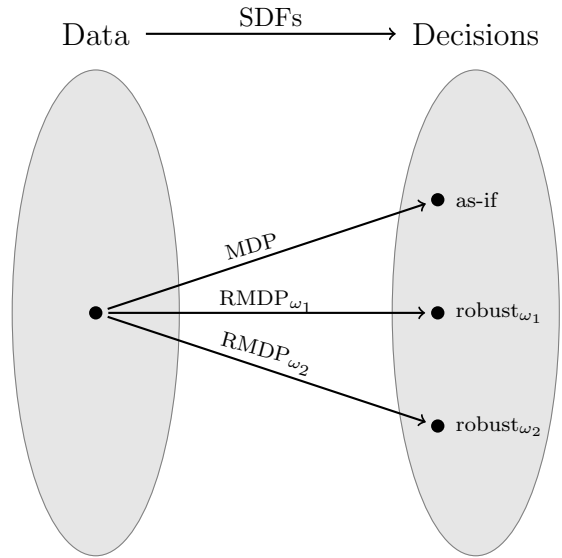
\begin{figure}[h!]\centering
 \centering
 \begin{tikzpicture}[ele/.style={fill=black,circle,minimum width=5pt,inner sep=0.5pt},every fit/.style={ellipse,draw,inner sep=18pt}]

\node (data) at (-1,6.2) {Data};
\node (decisions) at (4,6.2) {Decisions};

\node[] (a1) at (-1,4) {};
\node[] (a2) at (-1,1) {};

\node[] (b1) at (4,4) {};
\node[] (b2) at (4,1) {};

\node[fill,gray!20,fit= (a1) (a2),minimum width=2cm] {} ;
\node[fill,gray!20,fit= (b1) (b2),minimum width=2cm] {} ;

\node[draw, gray,fit= (a1) (a2),minimum width=2cm] {} ;
\node[draw, gray, fit= (b1) (b2),minimum width=2cm] {} ;

\node[ele] (d0) at (-1,2.5) {};

\node[ele, label=right:{\scriptsize as-if}] (d1) at (3.5,4) {};
\node[ele, label=right:{\scriptsize $\text{robust}_{\omega_1}$}] (d2) at (3.5,2.5) {};
\node[ele, label=right:{\scriptsize $\text{robust}_{\omega_2}$}] (d3) at (3.5,1) {};

  \draw[->,thick,shorten <=2pt,shorten >=2pt] (d0) -- node[label={[xshift=-0.2cm, yshift=-0.35cm]\rotatebox{15}{{${\scriptstyle\text{MDP}}$}}}] {}(d1);
  \draw[->,thick,shorten <=2pt,shorten >=2pt] (d0) -- node[label={[xshift=0cm, yshift=-0.3cm]{${\scriptstyle\text{RMDP}_{\omega_1}}$}}] {}(d2);

  \draw[->,thick,shorten <=2pt,shorten >=2pt] (d0) -- node[label={[xshift=0cm, yshift=-0.4cm]\rotatebox{-15}{${\scriptstyle\text{RMDP}_{\omega_2}}$}}] {}(d3);

 \draw[->,thick,shorten <=2pt,shorten >=2pt] (data) -- node[label={[xshift=0cm, yshift=-0.2cm]\footnotesize{SDFs}}] {}(decisions);

 \end{tikzpicture}
\caption{RMDPs as statistical decision functions}\label{RMDPs as statistical decision functions}
\end{figure}\FloatBarrier
Our insight to evaluate robust decisions using statistical decision theory applies to the whole literature on data-driven robust optimization. There exists a growing number of applications of data-driven robust decision-making in a variety of settings, including portfolio decisions \citep{Jin.2020,Zymler.2013}, elective admission to hospitals \citep{He.2019,Meng.2015}, the timing of medical interventions \citep{Goh.2018,Kaufman.2017}, and managing the production of renewable energy \citep{Alismail.2018,Samuelson.2017}.\footnote{See the recent surveys by \citet{Bertsimas.2018}, \citet{Keith.2021}, and \citet{Rahimian.2019} for numerous additional examples.} \\

Despite its broad field of application, the existing literature only offers limited guidance on choosing the optimal level of robustness. At the most basic level, the recommendations range from simply advocating a high level of robustness \citep{Ben-Tal.2013} to choosing a level of robustness that ensures a pre-specified worst-case performance \citep{Brown.2012}. These approaches ignore the trade-off between a performance guarantee under a worst-case scenario and reduced performance in all other cases. Most recently, and much closer to our approach, \citet{Gotoh.2021} put the robustness trade-off front and center. Adopting ideas from the machine learning literature, they calibrate the level of robustness by trading off the mean and variance in the out-of-sample performance of a robust decision rule. However, their approach restricts attention to the neighborhood of a realized point estimate. Thus, their analysis is ex-post as it does not aggregate performance over all possible realizations of the data.\\

At the same time, in econometrics, there is a burgeoning interest in assessing the sensitivity of findings to model or moment misspecification.\footnote{See for example \citet{Andrews.2020}, \citet{Andrews.2017}, \citet{Armstrong.2021}, \citet{Bonhomme.2020}, \cite{Chernozhukov.2020},  \citet{Christensen.2019}, and \citet{Honore.2020}.} Our work is related to \citet{Jorgensen.2021}, who develops a measure to assess the sensitivity of results by fixing a subset of parameters of a model before the estimation of the remaining parameters. Our approach differs as we directly incorporate model ambiguity in the design of the decision-making process and assess the performance of a decision rule under misspecification of the decision environment. As such, our focus on ambiguity faced by decision-makers about the model draws inspiration from the research program summarized in \citet{Hansen.2016} that tackles similar concerns with a theoretical focus. We complement recent work by \cite{Saghafian.2018}, who works in a setting similar to ours, but does not use statistical decision theory to determine the optimal robust decision rule. In ongoing work, \citet{Eisenhauer.2021} use statistical decision theory to structure policy decisions in light of uncertainty about counterfactual policy predictions due to the remaining model ambiguity after the estimation of a model. While they conduct an ex-post evaluation of alternative policy proposals using decision-theoretic criteria, we perform a proper ex-ante analysis of competing decision rules. We evaluate each rule's performance under all possible parameterizations of the model and directly account for the model ambiguity in their construction. In addition, we contribute to the work on optimal treatment allocation started in \citet{Manski.2004} and \citet{Manski.2009}, which characterizes the structure of optimal statistical decision functions and provides (asymptotic) bounds on their performance \citep{Hirano.2009,Stoye.2009,Tetenov.2012,Stoye.2012b,Kitigawa.2018}.\\

The structure of the remaining analysis is as follows. In Section \ref{Framework}, we present statistical decision theory as our framework to compare decision rules. We then set up a canonical model of a data-driven robust Markov decision problem in Section \ref{Conceptual framework} and outline the decision-theoretic determination of the optimal level of robustness. Section \ref{Bus replacement problem} presents our analysis of the robust bus replacement problem. Section \ref{Conclusion} concludes.

% !TEX root = ../main.tex
%---------------------------------------------------------------------------------------------------
\FloatBarrier\section{Statistical decision theory}\label{Framework}
%---------------------------------------------------------------------------------------------------
We now show how to compare as-if decision-making to its robust alternatives using statistical decision theory. We first review the basic setting and then turn to a classic urn example to illustrate some key points.

% !TEX root = ../../main.tex
%---------------------------------------------------------------------------------------------------
\subsection{Decision problem}
%---------------------------------------------------------------------------------------------------
We study a decision problem in which the consequence $c \in \C$ of various alternative actions $a\in\mathcal{A}$ depend on the parameterization $\theta\in \Theta$ of an economic model. A consequence function $\rho: \A \times \Theta \mapsto \C$  details the consequence of action $a$ under parameters $\theta$:
\begin{align*}
c = \rho(a, \theta).
\end{align*}

A decision-maker ranks consequences according to a utility function $u: \C \mapsto \R$, where higher values are more desirable. The structure of the decision problem $(\A, \Theta, \C, \rho, u)$ is known, but the true parameterization $\theta_0$ is uncertain. As a result, the consequences of a particular action are ambiguous. An observed sample of data $\psi \in \Psi$, however, provides a signal about the true parameters, as $P_{\theta}$ -- the sampling distribution of $\psi$ -- differs by $\theta$. A statistical decision function (SDF) $\delta: \Psi \mapsto \mathcal{A}$ is a procedure that determines an action for each possible realization of the sample.\\

In our application, we study the bus replacement problem with unknown future mileage utilizations. The decision problem is dynamic, so a decision-maker acts by committing to a plan that specifies whether to maintain or replace a bus in any possible future scenario. The consequences of executing a particular plan are a stream of maintenance costs, aggregated by its discounted sum of utilities. A plan's total utility is determined by the true distribution of the bus mileage utilization. The optimal decision rule based on an RMDP depends on the observed sample of past mileage transitions as the sample informs the construction of the ambiguity set. So, each RMDP is one example of an SDF for the bus replacement problem. We consider many RMDPs with varying levels of robustness and thus analyze a whole class of SDFs.\\

Statistical decision theory provides the framework to compare the performance of alternative decision functions  $\delta \in \Gamma$. The utility achieved by any $\delta$ is a random variable before realizing $\psi$. Thus, \citet{Wald.1950} suggests measuring the performance of $\delta$ at all possible parametrizations $\theta$ by computing the expected utility with respect to its induced sampling distribution $P_{\theta}$:
\begin{align*}
  \E_{\theta}\left[u\left(\rho(\delta(\psi), \theta)\right)\right] = \int_\Psi u\left(\rho(\delta(\psi), \theta)\right) d P_{\theta}(\psi).
\end{align*}
In general, no single decision function yields the highest expected utility for all possible parameterizations. In this case,  determining the best decision function  $\delta^*$ is not straightforward. Still, decision theory proposes various criteria \citep{Gilboa.2009,Marinacci.2015} to aggregate the performance of a decision function at all possible parameterizations. At the most fundamental level, any decision function is admissible if another function does not exist, whose expected utility is always at least as high. In most cases, several decision functions are admissible, and thus additional optimality criteria are needed. Our analysis explores three of the most common decision criteria:  (1) maximin, (2) minimax regret, and (3) subjective Bayes.\\

Following the maximin decision criterium \citep{Wald.1950}, we determine the optimal decision function by computing the minimum expected performance for each decision function over all points in the parameter space. We then choose the one with the highest minimum performance. Stated concisely,
\begin{align*}
\delta^*= \argmax_{\delta \in \Gamma } \min_{\theta\in \Theta} \E_{\theta}\left[u(\rho\left(\delta(\psi), \theta\right))\right].
\end{align*}

For the minimax regret criterion \citep{Niehans.1948}, we compute the maximum regret for each decision function over all points in the parameter space. The regret of choosing a decision function for any realization of $\theta$ is the difference between the maximum possible performance, where the true parameterization informs the decision, and its actual performance. We then select the decision function with the lowest maximum regret. Thus, the minimax regret criterion solves:
\begin{align*}
\delta^* =  \argmin_{\delta \in \Gamma } \max_{\theta\in \Theta}  \underbrace{\left[\max_{a \in \A}  u(\rho\left(a, \theta\right))  - \E_{\theta}\left[u(\rho\left(\delta(\psi), \theta\right)) \right]\right]}_{\text{regret}}.
\end{align*}

Subjective Bayes \citep{Savage.1954} requires a subjective probability distribution $f_{\theta}$ over the parameter space. Then, we select the decision function with the highest expected subjective utility:
\begin{align*}
\delta^* = \argmax_{\delta \in \Gamma }  \int_{\theta} \E_{\theta}\left[u(\rho\left(\delta(\psi), \theta\right))\right]df_{\theta}.
\end{align*}

% !TEX root = ../../main.tex
%---------------------------------------------------------------------------------------------------
\subsection{Urn example}
%---------------------------------------------------------------------------------------------------
We now illustrate the key ideas that allow us to compare as-if and robust decision-making using statistical decision theory with an urn example. As in our empirical application, we study a whole class of statistical decision functions. We first compare the performance of two distinct alternatives and then determine the optimal function within the class.\\

We consider an urn with black $b$ and white $w$ balls where the true share of black balls $\theta_0$ is unknown. In this example, the action constitutes a guess $\tilde{\theta}$ about $\theta_0$ after drawing a fixed number of $n$ balls at random with replacement. The parameter and action space both correspond to the unit interval $\Theta = \A = [0, 1]$. \\

If the guess matches the true share, we receive a payment of one. On the other hand, the payment is reduced by the squared error in case of a discrepancy. Thus, the consequence function takes the following form:
\begin{align*}
\rho(\tilde{\theta}, \theta_0) = 1 - (\tilde{\theta} - \theta_0)^2.
\end{align*}
Going forward, we assume a linear utility function and directly refer to the monetary consequences of a guess as its utility. The sample space is $\Psi = \{b, w\}^n$ where a sequence $(b, w, b, \hdots, b)$ of length $n$ is a typical realization of $\psi$. The observed number of black balls $r$ among the $n$ draws in a given sample $\psi$ provides a signal about $\theta_0$. The sampling distribution for the possible number of black balls $R$ takes the form of a probability mass function (PMF):
\begin{align*}
\Pr(R = r) = \left(\begin{array}{c} n \\ r \end{array} \right)\, (\theta_0)^r\, (1 - \theta_0)^{n-r}.
\end{align*}
Any function $\delta:  \{b, w\}^n \mapsto [0, 1]$ that maps the number of black draws to the unit interval is a possible statistical decision function.\\

We focus on the following class of decision functions $\delta \in \Gamma$, where each $\lambda$ indexes a particular decision function:
\begin{align*}
 \delta_\lambda(r) = \lambda\,\left(\frac{r}{n}\right)  + (1 - \lambda)\,\left(\frac{1}{2}\right),\qquad\text{for some}\quad 0 \leq \lambda \leq 1.
\end{align*}
The empirical share of black balls in the sample $r / n$ provides the point estimate $\hat{\theta}$. The decision functions in $\Gamma$ specify the guess as a weighted average between the point estimate and the midpoint of the parameter space. The larger $\lambda$ is, the more weight is put on the point estimate. At the extremes, the guess is either the point estimate ($\lambda = 1$) itself or fixed at $0.5$ ($\lambda = 0$).\\

We begin by comparing the performance of the two decision functions with $\lambda = 1$ and $\lambda = 0.9$. We refer to the former as the as-if decision function (ADF), as it announces the point estimate as if it is the true parameter. For reasons that will later become clear, we identify $\lambda=0.9$ as the robust decision function (RDF). Following \citet{Wald.1950}, we evaluate their relative performance by aggregating the vector of expected payoffs over the unit interval using the different decision-theoretic criteria. We set the number of draws $n$ to $50$.\\

Figure \ref{Calculation of expected payoff} shows the sampling distribution of the number of black balls $R$ and the associated payoff of following the two decision functions for each possible draw. The true, but unknown, share in this example is $40\%$, i.e. $\theta_0 = 0.4$. The RDF outperforms the ADF for realizations of the point estimates smaller than its true value due to the shift towards $0.5$. At the same time, the ADF leads to a higher payoff at the center of the distribution.\\

\begin{figure}[h!]\centering
\scalebox{0.75}{\includegraphics{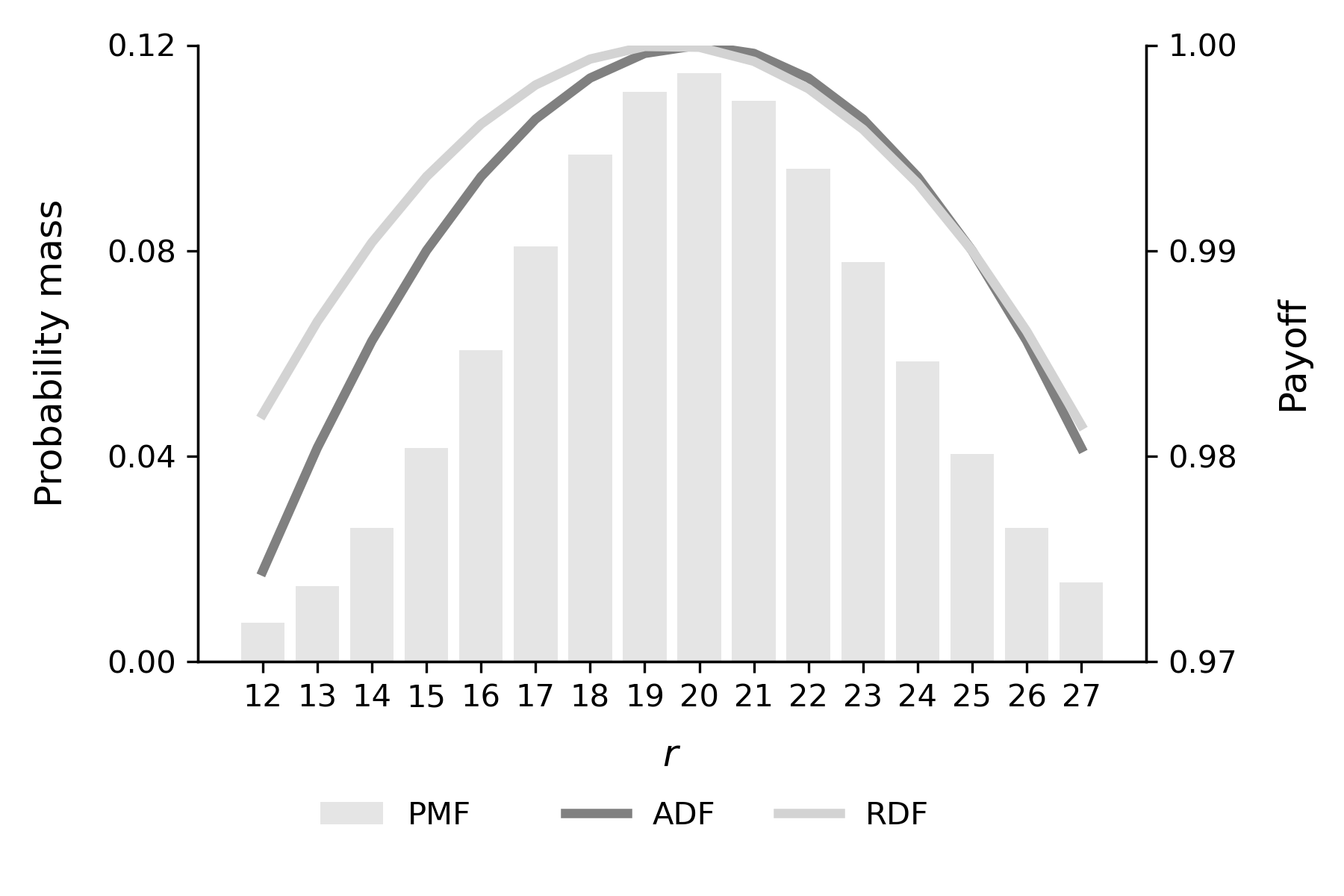}}
\caption{Calculating the expected payoff}\label{Calculation of expected payoff}
\end{figure}\FloatBarrier

Figure \ref{Measurement of performance} shows the expected payoff for varying shares $\theta$ of black balls in the urn. On the left, we show the expected payoff at two selected points. While the ADF performs better than the RDF at $\theta = 0.1$, the opposite is true at $\theta = 0.4$. Thus, both decision functions are admissible, as neither outperforms the other for all possible true shares. On the right, we trace the expected payoff of both functions over the whole parameter space. Although the RDF outperforms the ADF for shares in the center of the parameter space, it performs worse at the boundaries. Overall, the performance of the RDF is more balanced across the whole parameter space, which motivates its name.

\begin{figure}[h!]\centering
\subfloat[Expected payoff]{\scalebox{0.50}{\includegraphics{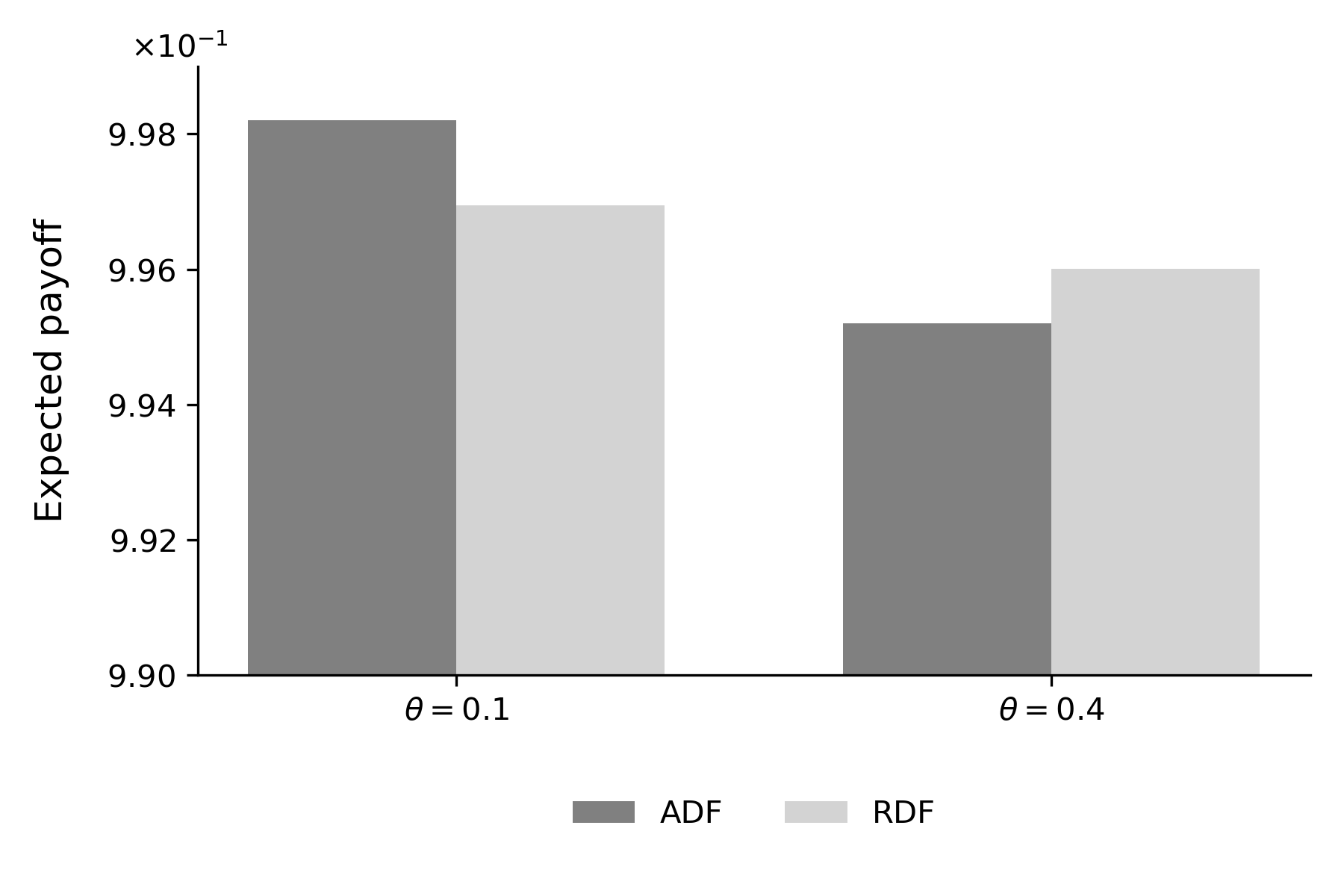}}}\hspace{0.3cm}
\subfloat[Performance]{\scalebox{0.50}{\includegraphics{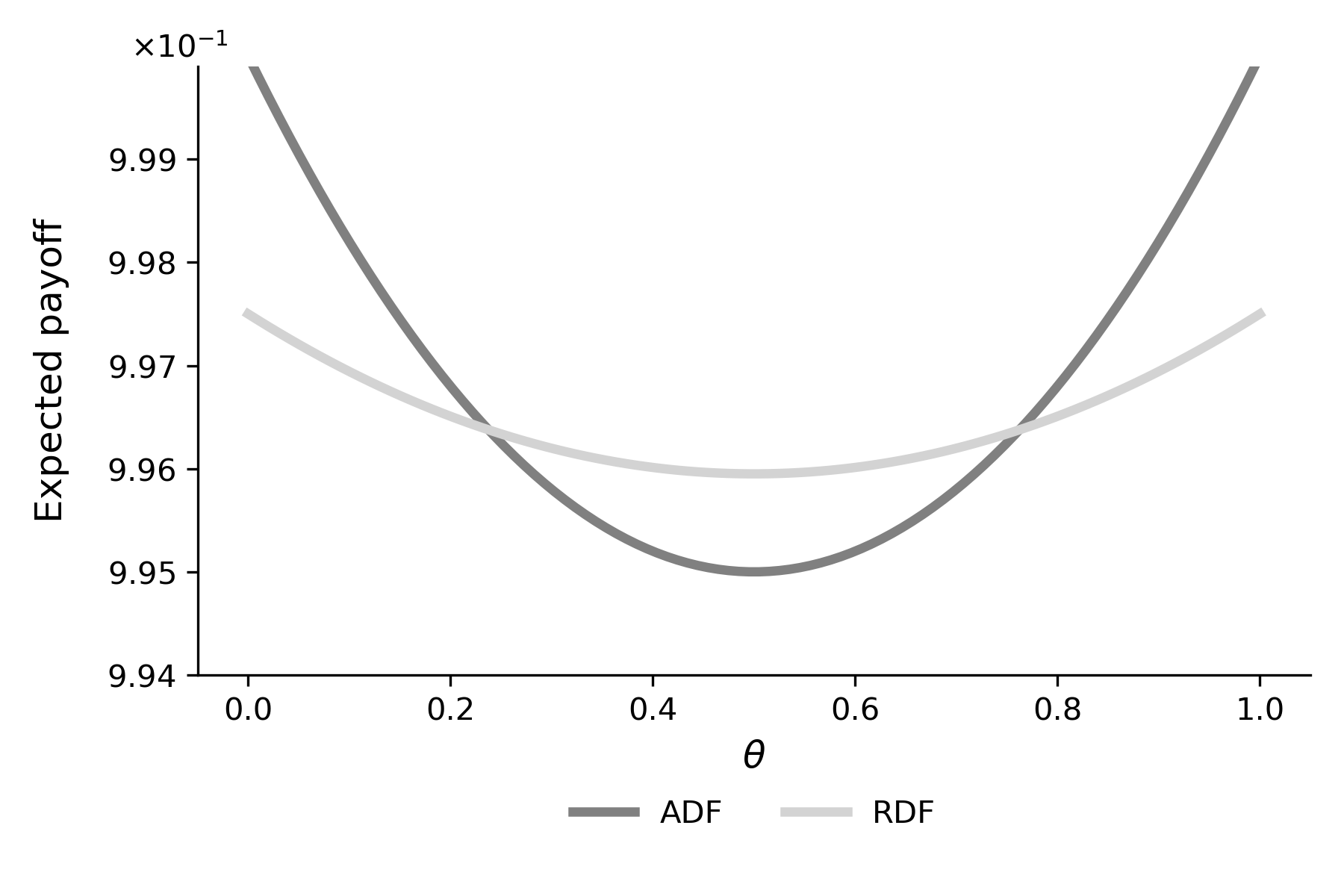}}}
\caption{Evaluating decision functions}\label{Measurement of performance}
\end{figure}\FloatBarrier

Figure \ref{Ranking of decision functions} ranks the two functions according to different decision-theoretic criteria. Both decision functions have their lowest expected payoff at $\theta = 0.5$. As the RDF outperforms its ADF alternative at that point, the RDF is preferred based on the maximin and minimax regret criteria. The maximin and minimax regret criteria are identical in this setting, as the payoff at the true share is constant across the parameter space. Using the subjective Bayes criterion with a uniform prior, we select the ADF, as its better performance at the boundaries of the parameter space is enough to offset its worse performance in the center.

\begin{figure}[h!]\centering
\scalebox{0.75}{\includegraphics{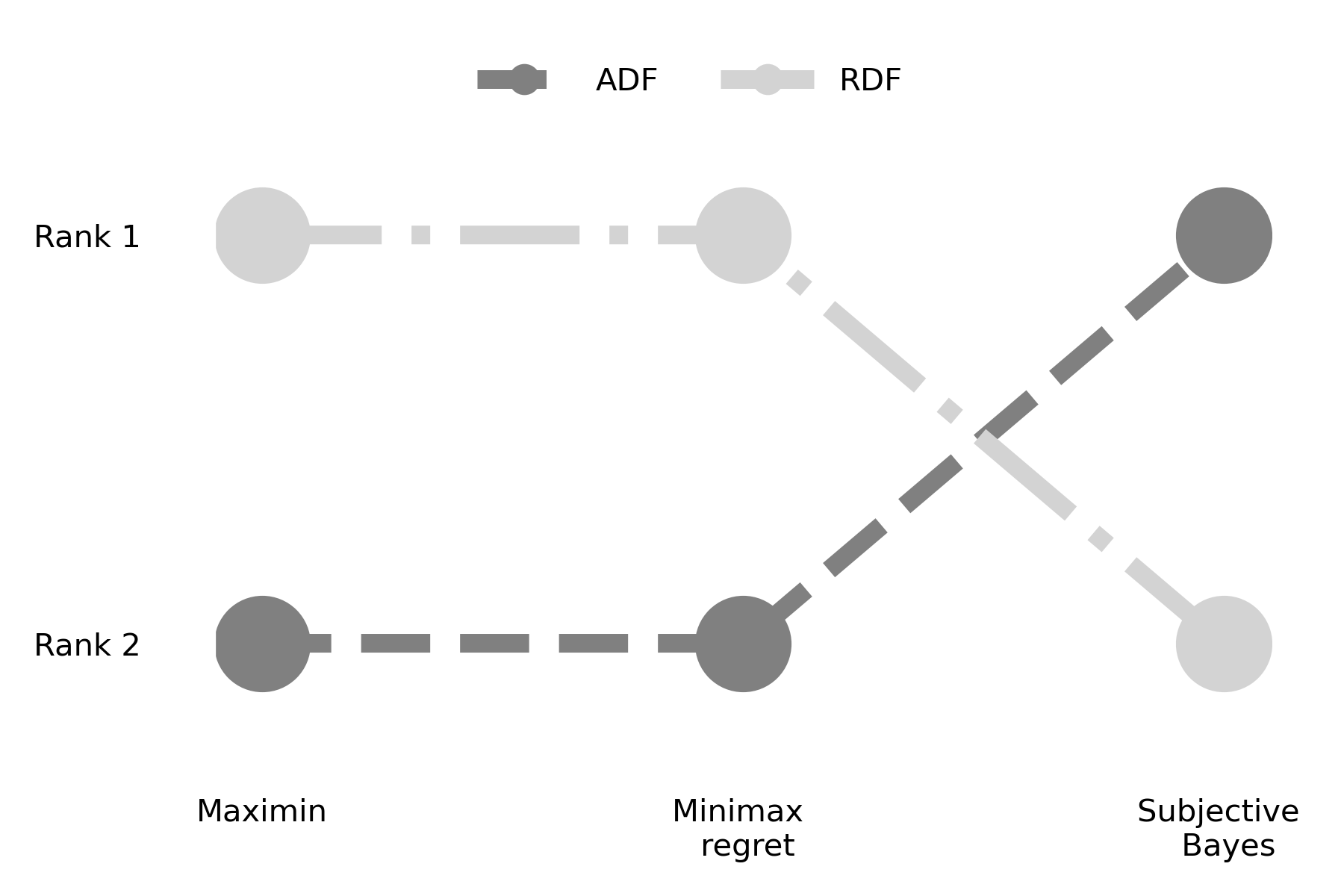}}
\caption{Ranking of decision functions}\label{Ranking of decision functions}
\end{figure}\FloatBarrier

Returning to the whole set of decision functions, we can construct the optimal statistical decision function in $\Gamma$ for the alternative criteria by varying $\lambda$ to maximize the relevant performance measure. For example, Figure \ref{Optimality of decision functions for urn} shows the minimum and the uniformly weighted performance for varying $\lambda$.

\begin{figure}[h!]\centering
\scalebox{0.75}{\includegraphics{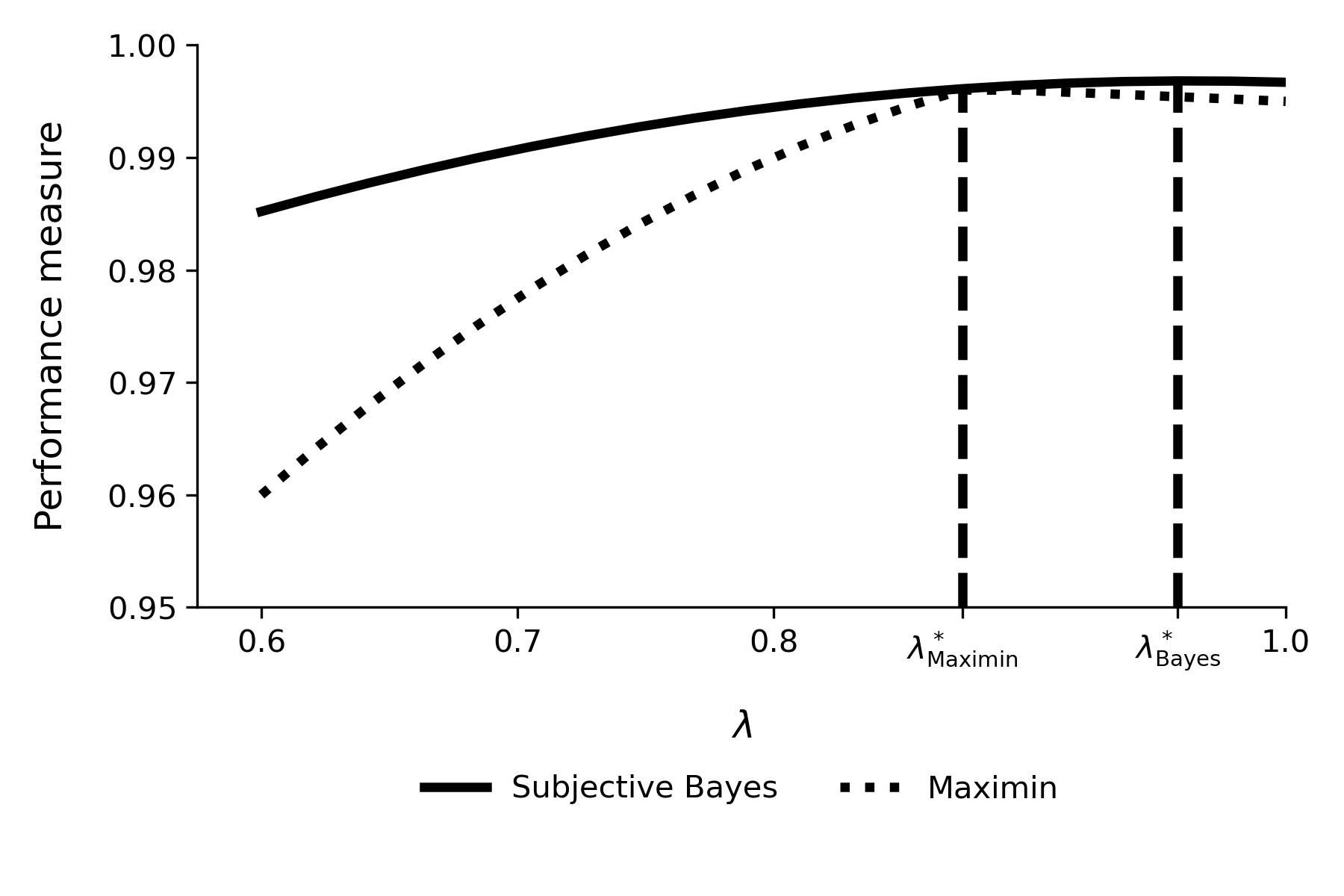}}\vspace{-0.9cm}
\begin{center}
\begin{minipage}[t]{0.5\columnwidth}
\item \scriptsize{\textbf{Notes:} We omit the performance measure for the minimax regret criterion, as $\lambda^*_{\text{Maximin}} = \lambda^*_{\text{Regret}}$ in this setting as noted earlier.}
\end{minipage}
\end{center}
\caption{Optimality of decision functions}\label{Optimality of decision functions for urn}
\end{figure}\FloatBarrier

Neither of our two decision functions analyzed earlier turns out to be optimal, as $\lambda^*_{\text{Bayes}} \approx 0.96$ and $\lambda^*_{\text{Maximin}} \approx 0.87$. Overall, the performance measure is more sensitive to the choice of $\lambda$ under the maximin criterion than under subjective Bayes.\\

In summary, the urn example illustrates the performance comparison of alternative decision functions over the whole parameter space. It shows how to construct an optimal decision function within a class for alternative decision-theoretic criteria. Next, we move to the more involved setting of a sequential dynamic decision problem with ambiguous transitions that we analyze in our application.

% !TEX root = ../main.tex
%-------------------------------------------------------------------------------
\FloatBarrier\section{Data-driven robust Markov decision problem}\label{Conceptual framework}
%-------------------------------------------------------------------------------
We now outline the framework of an RMDP for the analysis of sequential decision-making in light of model ambiguity. From the perspective of statistical decision theory, any RMDP with a fixed level of robustness is a statistical decision function. Once a sample of transitions is available, we construct the ambiguity set of a given size and solve the RMDP for a robust decision rule.\\

We first present the general setup of an RMDP and discuss the construction of the ambiguity set. We then turn to the solution approach and describe our decision-theoretic analysis to determine the optimal level of robustness. Throughout, we address the new challenges of analyzing an RMPD as opposed to a standard MDP. In line with our application, we discuss an infinite horizon model in discrete time, stationary utility and transition probabilities, and discrete states and actions.\footnote{See \citet{Puterman.1994} for a textbook introduction to the standard MDP and \citet{Rust.1994} for a review of MDPs in economics and structural estimation.}\\

We focus our exposition on ambiguity in the transition dynamics of the Markov decision process. We do not address uncertainty about the parameters of the reward functions. Although our central insight to use statistical decision theory to determine the optimal level of robustness is also relevant for the parameters of the reward functions, we do not address uncertainty pertaining to these parameters, as each setting introduces its unique computational challenges \citep{Mannor.2019}.

% !TEX root = ../../main.tex
%---------------------------------------------------------------------------------------------------
\FloatBarrier\subsection{Setting}
%---------------------------------------------------------------------------------------------------
We consider the following decision problem. At time $t = 0, 1, 2, \hdots$ a decision-maker observes the state of their environment $s_t \in \ST$ and chooses an action $a_t$ from the set of admissible actions $\A$. The decision has two consequences. It creates an immediate utility $u(s_t, a_t)$, and the environment evolves to a new state $s_{t+1}$. The transition from $s_t$ to $s_{t+1}$ is affected by the action, and governed by a transition probability distribution $p(s_t, a_t)$.\\

Decision-makers take the future consequences of the current action into account. While a decision rule $d_t$ specifies the planned action for all possible states within period $t$, a policy $\pi =\{ d_0, d_1, d_2, \hdots \}$ is a collection of decision rules and specifies all planned actions for all time periods.\\

Figure \ref{Timing} depicts the timing of events in the decision problem. At the beginning of period $t$, a decision-maker learns about the utility of each alternative, chooses one according to the decision rule $d_t$, and receives its immediate utility. Then, the state evolves from $s_t$ to $s_{t+1}$, and the process repeats itself in $t + 1$.\\
\begin{figure}[h!]\centering
%!TEX root = ../main.tex
% \definecolor{gray}{RGB}{31,119,180}

% \definecolor{light-gray}{gray}{0.85}

\begin{tikzpicture}[node distance=2cm]
%define styles
\tikzstyle{startstop} = [circle, rounded corners, minimum width=0.6cm, minimum height=0.3cm,text centered, draw=black]
[
->,
>=stealth',
auto,node distance=3cm,
thick,
main node/.style={circle, draw, font=\sffamily\Large\bfseries}
]
\tikzstyle{arrow} = [thick,->,>=stealth]]
\tikzstyle{darrow} = [dotted,->,>=stealth]]
%first and second column

\node (r0) [startstop, xshift = -3cm, draw = none] {};
\node (r999) [startstop, xshift = 13cm, draw = none] {};

\node (r1) [startstop, xshift = -1cm] {\footnotesize $~\,s_t\,~$};
\node (r2) [startstop, xshift = 5cm] {\footnotesize $s_{t+1}$};  %previously 4
\node (r3) [startstop, xshift = 11cm] {\footnotesize $s_{t+2}$}; %previously 8

\draw [arrow, dashed] (r0) -- node[anchor=south] {} (r1) ;
\draw [arrow] (r1) -- node[anchor=south] {} (r2) ;
\draw [arrow] (r2) -- node[anchor=south] {} (r3) ;
\draw [arrow, dashed] (r3) -- node[anchor=south] {} (r999) ;
t
\node (r4) [startstop, xshift = 0 cm, yshift = -2.5cm, inner sep = 0.08cm] {\footnotesize $~\,a_t\,~$ };
\node (r5) [startstop, xshift = 3.5
 cm, yshift = -2.5cm, inner sep = 0.08cm] {\footnotesize $~\,u_t\,~$ };
\node (r6) [startstop, xshift = 6 cm, yshift = -2.5cm, inner sep = 0.08cm] {\footnotesize $a_{t+1}$ };
\node (r7) [startstop, xshift = 9.5 cm, yshift = -2.5cm, inner sep = 0.08cm] {\footnotesize $u_{t+1}$ };

\draw [arrow] (r1) -- node[anchor=south] {} (r4) ;
\draw [arrow] (r2) -- node[anchor=south] {} (r6) ;

\draw[ thick](0,-2.05).. controls (0.75, -0.2) and (0.8,0)..(1.5, 0);
\draw [arrow] (r4) -- node[anchor=south] {} (r5) ;
\draw[ thick](6,-2.05).. controls (6.75, -0.2) and (6.85,0)..(7.5, 0);
\draw [arrow] (r6) -- node[anchor=south] {} (r7) ;

\node(r8)[startstop, xshift = - 1.3cm, yshift = -1.3cm, draw =none, align=center] {\footnotesize decide \\ \footnotesize $d_{t}$ };
\node(r9)[startstop, xshift= 4.6cm, yshift = -1.3cm, draw =none, align = center] {\footnotesize decide \\ \footnotesize $d_{t + 1}$ };
%\node(r10)[startstop, yshift = 1cm, xshift = 2cm, draw =none, align=center ] {\footnotesize Transition probability\\ \footnotesize distribution \\ \footnotesize $p_t(s_t, a_t)$};
%\node(r11)[startstop, yshift = 1cm, xshift = 8cm, draw =none, align=center ] { \footnotesize Transition probability\\ \footnotesize distribution \\ \footnotesize $p_{t+1}(s_{t+1}, a_{t+1})$};
\node(r10)[startstop, yshift = 0.5cm, xshift = 1.7cm, draw =none, align=center ] {\footnotesize transition \\ \footnotesize $p(s_t, a_t)$};
\node(r11)[startstop, yshift = 0.5cm, xshift = 8cm, draw =none, align=center ] {\footnotesize transition \\ \footnotesize $p(s_{t+1}, a_{t+1})$};
\node(r12)[startstop, yshift = -2cm, xshift = 1.7cm, draw =none, align=center ] {\footnotesize receive\\ \footnotesize $u(s_t, a_t)$};
\node(r13)[startstop, yshift = -2cm, xshift = 7.7cm, draw =none, align=center ] {\footnotesize receive\\ \footnotesize $u(s_{t+1}, a_{t+1})$};

\end{tikzpicture}
\caption{Timing of events}\label{Timing}
\end{figure}
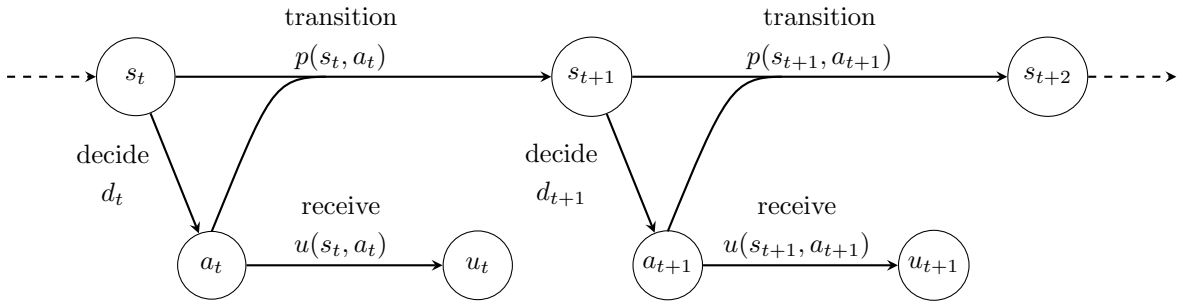\FloatBarrier

In a standard Markov decision process (MDP), a single transition probability distribution $p(s_t, a_t)$ is associated with each state-action pair. This distribution is assumed to be known, and thus the MDP incorporates risk only. In an RMDP, there is a whole set of distributions associated with each state-action pair collected in an ambiguity set $p(s_t, a_t) \in \mathcal{P}(s_t, a_t)$. For a particular RMDP, the ambiguity set is assumed to be known, and thus the RMDP incorporates risk for a given distribution and ambiguity about the true distribution.\\

In a standard MDP, the objective of a decision-maker in state $s_t$ at time $t$ is to choose the optimal policy $\pi^*$ from the set of all possible policies $\Pi$ that maximizes their expected total discounted utility $\tilde{v_t}^{\pi^*}(s_t)$ as formalized in Equation (\ref{Objective risk}):
\begin{align}\label{Objective risk}
\tilde{v_t}^{\pi^*}(s_t) \equiv \max_{\pi\in\Pi} \E^\pi\left[\sum^{\infty}_{r = 0}  \delta^{t+r} u(s_{t+r}, d_{t+r}(s_{t+r}))\right].
\end{align}
The exponential discount factor $\delta$ parameterizes a taste for immediate over future utilities. The superscript of the expectation emphasizes that each policy induces a different probability distribution over sequences of possible futures. As long as transition probabilities used to construct the policy are in fact correct, the standard value function $\tilde{v_t}^{\pi^*}(s_t)$ measures the performance of the optimal policy.\\

In an RMDP, the goal is to implement an optimal policy that maximizes the expected total discounted utility under a worst-case scenario. Given the ambiguity about the transition dynamics, a policy induces a whole set of probabilities over sequences of possible future utilities $\mathcal{F}^\pi$, and the worst-case realization determines its ranking. The formal representation of the decision-maker's objective is Equation (\ref{Objective risk and ambiguity}):
\begin{align}\label{Objective risk and ambiguity}
v_t^{\pi^*}(s_t) \equiv \max_{\pi\in\Pi} \left\{\min_{\vec{P} \in \mathcal{F}^\pi}\E^\vec{P}\left[\sum^{\infty}_{r = 0}  \delta^{t+r} u(s_{t+r}, d_{t+r}(s_{t+r}))\right]\right\}.
\end{align}

We consider a setting where historical data provides information about the transition dynamics. In the data-driven standard  MDP, the empirical probabilities $\hat{p}(s_t, a_t)$ serve as a plug-in for the truth, and the solution of the MDP provides an as-if decision rule. In a data-driven RMDP, the empirical probabilities are used to construct the ambiguity sets for the transitions, and the solution of the RMDP provides a robust decision rule.\\

We follow \cite{Ben-Tal.2013} and create the ambiguity sets using statistical hypothesis testing. We restrict attention to distributions we cannot reject with a certain level of confidence $\omega \in [0, 1]$ around the empirical probabilities and collect them in an estimated ambiguity set $\hat{\mathcal{P}}(s_t, a_t; \omega)$. Different values of $\omega$ result in different RMDPs, each with its own statistical decision function. Two special cases stand out. First, if $\omega = 0$, then a decision-maker treats the empirical probabilities as if they are correct. This case captures the notion of as-if decision-making. Second, for $\omega = 1$, a robust decision-maker considers the worst-case scenario over the whole probability simplex at each state-action pair when constructing the optimal policy.

% !TEX root = ../../main.tex
%-------------------------------------------------------------------------------
\FloatBarrier\subsection{Solution}\label{Solution}
%-------------------------------------------------------------------------------
In a standard MDP, the objective is to maximize the expected total discounted utility as formalized in Equation (\ref{Objective risk}). This requires evaluating the performance of all policies based on all possible sequences of utilities and the probability that each occurs. Fortunately, the stationary Markovian structure of the problem implies that the future looks the same whether the decision-maker is in state $s$ at time $t$ or any other point in time. The only variable that determines the value to the decision-maker is the current state $s$. Thus, the optimal policy is stationary as well \citep{Blackwell.1965}, and the same decision rule is used in every period. The value function is independent of time and of the solution to the following Bellman equation:
\begin{align}\label{Bellman equation}
\tilde{v}(s) = \max_{a \in \A} \biggl[u(s, a) + \delta\thin \int \thin \tilde{v}(s^\prime)\thin \hat{p}(ds^\prime| s , a)\biggr].
\end{align}
The as-if decision rule is recovered from Equation (\ref{Bellman equation}) by finding the value $a \in \A$ that attains a maximum for each $s \in \ST$.\\

Let $\mathbb{V}$ denote the set of all bounded real value functions on $\ST$. Then, the Bellman operator $\tilde{\Lambda} : \mathbb{V} \rightarrow \mathbb{V}$ is defined as follows: For all $w\in\mathbb{V}$
\begin{align}
\label{classic_bellman}
  \tilde{\Lambda}(w)(s) = \max_{a \in \A} \biggl[u(s, a) + \delta\thin \int \thin w(s^\prime)\thin \hat{p}(ds^\prime| s , a)  \biggr], \quad s\in \ST.
\end{align}
Under mild conditions, $\tilde{\Lambda}$ is a contraction mapping and allows to compute the value function $\tilde{v}(\cdot)$ as its unique fixed point \citep{Denardo.1967}.\\

For an RMDP, where transition probabilities are ambiguous, the contraction mapping property of the Bellman operator and the optimality of a stationary deterministic Markovian decision rule both require the assumption of rectangularity of $\mathcal{F}^\pi$ \citep{Iyengar.2005,Nilim.2005}. As the realization of any particular distribution in a state-action pair does not affect future realizations, rectangularity is a form of an independence assumption. The uncertainty is uncoupled across states and actions. This approach rules out any kind of learning about future ambiguity from past experiences due to, for example, a common source of uncertainty across states. While restrictive, most applications rely on the rectangularity assumption, as general notions of coupled uncertainties are intractable \citep{Wiesemann.2013}.\footnote{See \cite{Mannor.2016} and \cite{Goyal.2020} for recent attempts to introduce milder rectangularity conditions.}\\

We now develop the formal definition of rectangularity. Let $\mathcal{M}(\ST)$ denote the set of all probability distributions on $\ST$. Then, the set of all conditional transition probability distributions associated with any decision rule $d$ is given by:
\begin{align*}
\mathcal{F}^{d} & = \{p: \ST \rightarrow \mathcal{M}(\ST) \thin | \thin \forall s \in \ST,\thin p(s) \in \hat{\mathcal{P}}(s, d(s); \omega)\}.
\end{align*}
For every state $s \in \ST$, the next state can be determined by any $p \in \hat{\mathcal{P}}(s, d(s); \omega)$.\\

A policy $\pi$ now induces a set of probability distributions $\mathcal{F}^\pi$ on the set of all possible histories $\mathcal{H}$. Any particular history $h = (s_0, a_0, s_1, a_1, \hdots)$ can be the result of many possible combinations of transition probabilities. Rectangularity imposes a structure on the combination possibilities.

\begin{Assumption}\label{Rectangularity}\textbf{Rectangularity} The set $\mathcal{F}^\pi$ of probability  distributions associated with a policy $\pi$ is given by
\begin{align*}
\mathcal{F}^\pi & = \bigg\{\vec{P} \mid \forall\, h\in \mathcal{H}:\, \vec{P}(h) =\prod^{\infty}_{t = 0}  p(s_{t+1}|s_t, a_t), \textup{ with } p(s_t, a_t) \in \hat{\mathcal{P}}(s_t, d_t(s_t); \omega) \textup{ for } t = 0, 1, \hdots \bigg\} \\
&= \mathcal{F}^{d_0} \times \mathcal{F}^{d_1} \times \mathcal{F}^{d_2} \times \hdots = \prod^{\infty}_{t = 0} {F}^{d_t},
\end{align*}
where the notation simply denotes that each element in $\mathcal{F}^\pi$ is a product of $p \in\mathcal{F}^{d_t}$, and vice versa \citep{Iyengar.2005}.
\end{Assumption}

Assumption \ref{Rectangularity} formalizes the idea that ambiguity about the transition probability distribution is uncoupled across states and time. All elements of the ambiguity sets can be freely combined to generate a particular history.\\

The objective when facing ambiguity is to implement a policy $\pi^*$ that maximizes the expected total discounted utility under a worst-case scenario as presented in Equation (\ref{Objective risk and ambiguity}). Under the rectangularity assumption, the decision-maker faces the same uncertainty, whether he is in state $s$ at time $t$ or any other point in time. Thus, the value function is independent of time and solely depends on the current state $s$. It is the solution to the robust Bellman equation (\ref{Robust Bellman equation}), where the future value is evaluated using the worst-case element in the ambiguity set \citep{Iyengar.2005}:
\begin{align}\label{Robust Bellman equation}
v(s) = \max_{a \in \A} \biggl[u(s, a) +  \delta\thin \underset{ p\in\hat{\mathcal{P}}(s, a; \omega)}{\min}\thin \int v(s^\prime)\thin p(ds^\prime|s, a)\biggr].
\end{align}
The robust decision rule is recovered from Equation (\ref{Robust Bellman equation}) by finding the value $a \in \A$ that attains a maximum for each $s \in \ST$ under the worst-case scenario for all distributions in the ambiguity set.\\

The robust Bellman operator on $\mathbb{V}$ follows directly: For all $w\in\mathbb{V}$
\begin{align}\label{robust Bellman operator}
  \Lambda(w)(s) = \max_{a \in \A} \biggl[u(s, a) + \delta\thin \min_{ p\in \hat{\mathcal{P}}(s, a; \omega)} \int w(s^\prime)\thin p(ds^\prime|s, a)  \biggr] \quad s\in\ST.
\end{align}
Algorithm \ref{Robust Value Iteration Algorithm} allows solving the RMDP by a robust version of the value iteration algorithm where $\kappa$ denotes a convergence threshold. The calculation of future values under the worst-case scenario is the key difference to the standard approach.

\vspace{0.5cm}\begin{algorithm}
\caption{\strut Robust Value Iteration Algorithm}\label{Robust Value Iteration Algorithm}
\begin{algorithmic}\vspace{0.3cm}
\State $ \textbf{Input:}\quad v \in \mathbb{V}, \kappa > 0$\\\\
For each $s \in \ST$, set $\hat{v}(s) = \underset{a \in \A}{\max}\left\{u(s, a) + \underset{ p\in \hat{\mathcal{P}}(s, a; \omega)}{\min}\int v(s^\prime)\thin p(ds^\prime|s, a)]\right\}$.\\\\
\While{$\left\| \, v  - \hat{v}\, \right\|_\infty > \kappa$}\\\\
$\qquad v = \hat{v}$ \\
$\qquad\forall\, s \in \ST$, set $\hat{v}(s) = \underset{a \in \A}{\max}\left\{u(s, a) + \underset{ p\in \hat{\mathcal{P}}(s, a; \omega)}{\min}\int v(s^\prime)\thin p(ds^\prime|s, a)]\right\}$\\
\EndWhile
\vspace{0.3cm}\end{algorithmic}
\end{algorithm}

% !TEX root = ../../main.tex
%---------------------------------------------------------------------------------------------------
\FloatBarrier\subsection{Evaluation}
%---------------------------------------------------------------------------------------------------
The solution of an RMDP is tailored to the simultaneous worst-case realization of all distributions in all ambiguity sets. Although this conservative approach ensures a minimum performance over all distributions in the set, the performance of the robust decision rule in all other cases is disregarded. This indifference introduces a trade-off when determining the size of the ambiguity set \citep{Delage.2010}. The larger the set, the more scenarios for which a minimum performance is ensured. However, the robust rule's general performance suffers. This trade-off is particularly pronounced when the actual structure of the decision problem exhibits coupled uncertainties that are ignored in the construction of the robust rule to ensure its computational tractability.\\

Statistical decision theory allows us to navigate the trade-off and determine the optimal level of robustness. Each RMDP is a different statistical decision function, and we consider the whole class of statistical decision functions each indexed by $\omega\in [0 , 1]$. Adapting our urn example from earlier to accommodate setting of a data-driven RMDP, the parameter space corresponds to the set of transition probability distributions $\mathcal{L(\mathcal{S}, \mathcal{A})} = \{p:\mathcal{S} \times \mathcal{A} \rightarrow \mathcal{M}(\mathcal{S})\}$. We observe data on the transition probabilities and measure the actual performance of a robust decision rule $\eta(\hat{p}; p_0, \omega)$ as the discounted sum of utilities, which depends on the estimate of the transition probabilities $\hat{p}$, the true underlying probabilities $p_0$, and the confidence level $\omega$ used to construct the robust decision function. The standard decision-theoretic criteria translate to this setting as follows:
\begin{align*}\renewcommand{\arraystretch}{1.3}
\begin{array}{ll}
\text{\textbf{Maximin}}\qquad\qquad & \omega^* = \argmax_{\omega \in [0, 1]} \min_{p\in \mathcal{L(\mathcal{S}, \mathcal{A})}} \E_{p}\left[\eta(\hat{p} ; p, \omega)\right]\\
\text{\textbf{Minimax regret}}\qquad\qquad & \omega^* = \argmin _{\omega \in [0, 1]} \max_{p\in \mathcal{L(\mathcal{S}, \mathcal{A})}}\left[\max_{\tilde{\omega}\in[0, 1]}   \eta(p ; p, \tilde{\omega}) -\E_{p}\left[\eta(\hat{p} ; p, \omega)\right] \right] \\
\text{\textbf{Subjective Bayes}}\qquad\qquad &w^* =\argmax_{\omega \in [0, 1]} \int_{\mathcal{L(\mathcal{S}, \mathcal{A})}} \E_{p}\left[\eta(\hat{p} ; p, \omega)\right] \, d f_p \\
\end{array}
\end{align*}
Note that even for genuinely uncoupled uncertainties, the maximin criterion does not automatically select the most robust statistical decision function ($\omega = 1$). This particular decision function is based on the worst-case scenario over the full probability simplex at each state-action pair. In fact, the worst-case decision function might not be admissible in particular settings where it is weakly dominated by the as-if (or some other) decision function. Suppose, for example, the true distribution corresponds to the worst-case distributions. In this case, the distribution of sampled transitions is degenerate, as the worst-case scenario at each state-action pair is the certain transition to the state with the lowest future value \citep{Nilim.2005}. Thus, the as-if and worst-case decision functions share the same performance. For all other true distributions, the as-if decision function may very well outperform the worst-case decision function if the sampled data is sufficiently informative.

% !TEX root = ../main.tex
%---------------------------------------------------------------------------------------------------
\FloatBarrier\section{Bus replacement problem}\label{Bus replacement problem}
%---------------------------------------------------------------------------------------------------
We now study robust decision-making in the seminal bus replacement problem. First, we discuss the general setting and the details of the computational implementation. Second, we conduct an ex-post analysis of robust decision rules constructed for the observed sample of mileage transitions analyzed in \citet{Rust.1987}. Third, considering the situation before any data is realized, we conduct an ex-ante analysis of robust decision functions with varying levels of robustness over the whole probability simplex, which allows us to determine the optimal level of robustness using statistical decision theory.

% !TEX root = ../../main.tex
%-------------------------------------------------------------------------------
\FloatBarrier\subsection{Setting}
%-------------------------------------------------------------------------------
The bus replacement model is set up as a regenerative optimal stopping problem \citep{Chow.1971}.
It is motivated by the sequential decision problem of a maintenance manager, Harold Zurcher, for a fleet of buses. He makes repeated decisions about their maintenance to maximize the expected total discounted utility under a worst-case scenario. Each month $t$, a bus arrives at the bus depot in state $s_t = (x_t, \epsilon_t)$ described by its mileage since the last engine replacement $x_t$ and other signs of wear and tear $\epsilon_t$. He faces the decision to either conduct a complete engine replacement $(a_t = 1)$ or perform basic maintenance work $(a_t = 0)$. The cost of maintenance $c(x_t)$ increases with the mileage state, while the cost of replacement $RC$ remains constant. In the case of an engine replacement, the mileage state is reset to zero. Note that we do not attempt to describe Harold Zurcher's decision-making process. Instead, we are interested in how a generic decision-maker should make decisions in this setting.\\

The immediate utility of each action is given by:
\begin{align*}
u(a_t, x_t) + \epsilon_t(a_t) \quad \text{with} \quad u(a_t, x_t) = \begin{cases}
-RC   & a_t = 1 \\
-c(x_t) & a_t = 0.
\end{cases}
\end{align*}

Decisions are made in light of uncertainty about next month's state variables captured by their conditional distribution $p(x_t, \epsilon_t, a_t)$.\\

Although in this framework, the utility and consequently the value function is finite in each state, they are not uniformly bounded. This property, however, is a crucial assumption for the results of \cite{Blackwell.1965} and \cite{Denardo.1967} on the contraction property of the Bellman operator and the stationarity of the optimal policy in the standard MDP setting. For the original as-if analysis, \citet{Rust.1988} circumvents this problem by imposing conditional independence between the observable and unobservable state variables, i.e. $p(x_{t+1}, \epsilon_{t+1}|  x_t, \epsilon_t, a_t) = p(x_{t+1}| x_t, a_t)\thin q(\epsilon_{t+1}|x_{t+1})$, and assuming that the unobservables $\epsilon_t(a_t)$ are independent and identically distributed according to an extreme value distribution with mean zero and scale parameter one. These two assumptions, together with the additive separability between the observed and unobserved state variables in the immediate utilities, ensure that the expectation of the next period's value function is independent of the time. The regenerative structure of the process implies that the transition probabilities in case of replacement in any mileage state correspond to the probabilities of maintenance in the zero mileage state. Therefore, the expected value function is the unique fixed point of a contraction mapping on the reduced space of mileage states only. In addition, the conditional choice probabilities $P(a_t | x_t)$ have a closed-form solution \citep{McFadden.1973}. We build on these results and extend them to our robust setting with ambiguous transition dynamics. The proof is available in Appendix \ref{The robust contraction mapping}.\\

In the analysis of the original bus replacement problem, the distribution of the monthly mileage transitions are estimated in a first step and used as plug-in components for the subsequent analysis. We extend the original setup and explicitly account for the ambiguity in the estimation. Following the arguments on the regenerative structure of the process above, we incorporate ambiguity in the RMDP with ambiguity sets conditional on the mileage states $x$ only. We construct ambiguity sets $\hat{\Pc}(x; \omega)$ based on the Kullback-Leibler divergence $D_{KL}$ \citep{Kullback.1951} that are statistically meaningful, computationally tractable, and anchored in empirical estimates $\hat{p}(x)$.\\

Our ambiguity set takes the following form for each mileage state $x$:
\begin{align*}
  \hat{\Pc}(x; \omega) = \biggl\{ q \in \mathring{\Delta}_{|J_x|}: D_{KL}(q \parallel \hat{p}(x)) = \sum_{i=1}^{|J_{x}|} q_i \ln\left(\frac{q_i}{\hat{p}(j_i|x)}\right) \leq \rho_{x}(\omega) \biggr\},
\end{align*}
where $J_x = \{j_1,\thin \dots,\thin j_{|J_x|}\}$ denotes the set of all states that have an estimated non-zero probability to be reached from $x$, $\mathring{\Delta}_{|J_{x}|} = \{p \in \mathbb{R}^{|J_{x}|}\,|\, p_i > 0 \text{ for all } i=1,\dots, |J_x| \text{ and } \sum_{i=1}^{|J_x|} p_i = 1\}$ is the interior of the $(|J_{x}| - 1)$ - dimensional probability simplex, and $\rho_{x}(\omega)$ captures the size of the set for the state $x$ with a given level of confidence $\omega$.\\

\cite{Iyengar.2002} and \cite{Ben-Tal.2013} provide the statistical foundation to calibrate $\rho_x(\omega)$ such that the true (but unknown) distribution $p_0$ is contained within the ambiguity set for a given level of confidence $\omega$. Let $\chi^2_{df}$ denote a chi-squared random variable with $df$ degrees of freedom, and let $F_{df}(\cdot)$ denote its cumulative distribution function with inverse $F^{-1}_{df}(\cdot)$. Then, the following approximate relationship holds as the number of observations $N_x$ for state $x$ tends to infinity \citep{Pardo.2005}:
\begin{align*}
  \omega & = \Pr [\,p_0\in \hat{\Pc}(x; \omega) \,] \\
         & \approx \Pr [\, \chi_{|J_x| - 1}^2 \leq 2 N_x\rho_x(\omega) \,]   \\
         & = F_{|J_x| - 1}(2 N_x \rho_x(\omega)).
\end{align*}
We can therefore calibrate the size of the ambiguity set based on the following relationship:
\begin{align}\label{Mapping}
  \rho_x(\omega) = \tfrac{1}{2N_x} F_{|J_x| - 1} ^ {-1} (\omega).
\end{align}

We use \citetalias{Rust.1987} original data to inform our computational experiments. His data consists of monthly odometer readings $x_t$ and engine replacement decisions $a_t$ for 162 buses. The fleet consists of eight groups that differ in their manufacturer and model. We focus on the fourth group of 37 buses with a total of 4,292 monthly observations. We discretize mileage into $78$ equally spaced bins of length $5,000$ and set the discount factor to $\delta=0.9999$.\\

Figure \ref{Distributions of observations} highlights the limited information about the true distribution of mileage utilization. It shows the number of observations available to estimate next month's utilization for different levels of accumulated mileage. While there are more than 1,150 observations on buses with less than 50,000 miles, there are only about 220 with more than 300,000.\\

\begin{figure}[h!]\centering
\scalebox{0.75}{\includegraphics{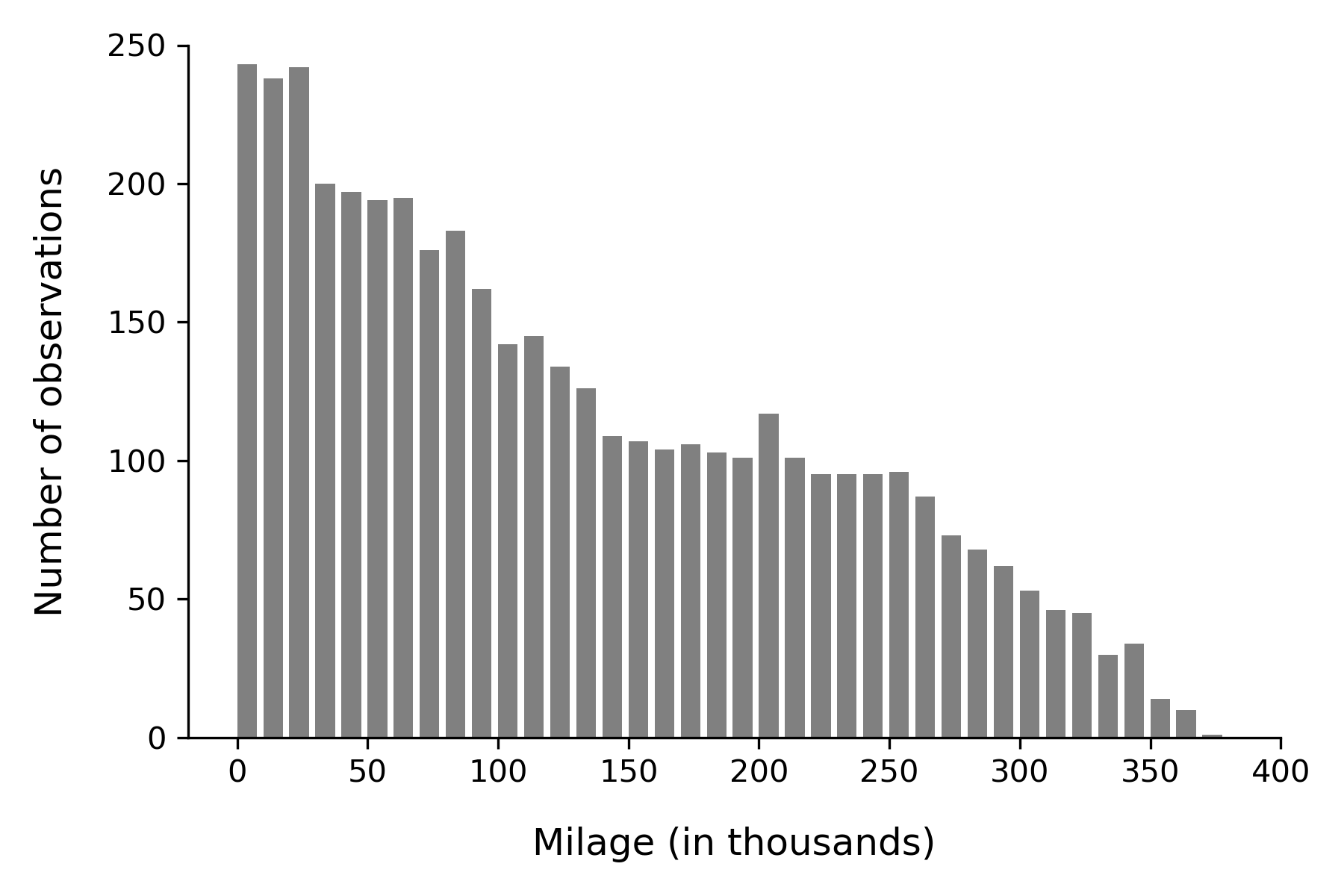}}
\caption{Distributions of observations}\label{Distributions of observations}
\end{figure}

We analyze a specific example of \citetalias{Rust.1987} bus replacement problem. We do not use his reported estimates of the maintenance and replacement costs. Given these estimates, decisions are mainly driven by the unobserved state variable $\epsilon_t$, and so ambiguity about the evolution of the observed state variable $x_t$ does not have a substantial effect on decisions. We ensure that a bus's accumulated mileage has a considerable impact on the timing of engine replacements by increasing the maintenance and replacement costs compared to their empirical estimates. Thus, we specify the following cost function $c(x_t) = 0.4\, x_t $ and set the replacement costs $RC$ to 50.\\

We solve the model using a modified version of the original nested fixed point algorithm (NFXP) \citep{Rust.1988}, and we determine the worst-case transition probabilities in each successive approximation of the fixed point. Given the size of the ambiguity set, we can determine the worst-case probabilities as the solution to a one-dimensional convex optimization problem \citep{Iyengar.2005,Nilim.2005}.\footnote{The core routines are implemented in our group's \citet{ruspy.2020} and \citet{robupy.2020} software packages and are publicly available.}

% !TEX root = ../../main.tex
%---------------------------------------------------------------------------------------------------
\subsection{Ex-post analysis}
%---------------------------------------------------------------------------------------------------
We first study as-if and robust decision rules for \citetalias{Rust.1987} observed sample of mileage transitions. We present the estimated transition probabilities and the corresponding worst-case distributions. We then explore alternative decision rules based on several RMDPs, outline the resulting differences in maintenance decisions, and evaluate their relative performance under different scenarios.\\

Figure \ref{Estimated transition probabilities} shows the point estimates $\hat{p}$ for the transition probabilities of monthly mileage usage. We pool all 4,292 observations to estimate this distribution by maximum likelihood, and thus the probability of the next period's mileage utilization is the same for each state $x_t$. We only observe increases of at most $J = 3$ grid points per month. For about 60\% of the sample, monthly bus utilization is between 5,000 and 10,000 miles. Very high usage of more than 10,000 miles amounts to only 1.2\%.\\

\begin{figure}[h!]\centering
\scalebox{0.75}{\includegraphics{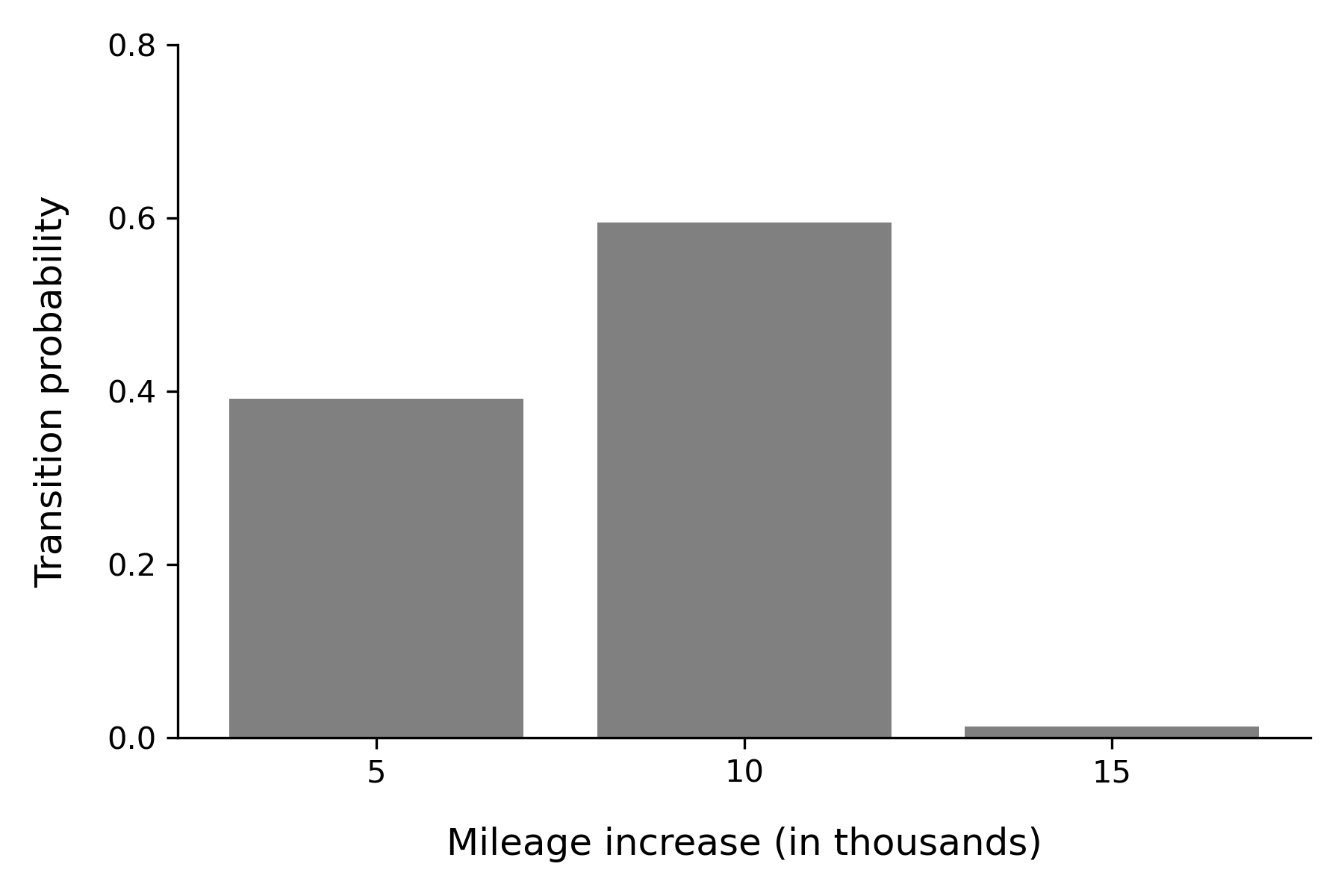}}
\caption{Estimated transition probabilities}\label{Estimated transition probabilities}
\end{figure}

The confidence level $\omega$ and the available number of observations $N_x$ determine the size of the ambiguity set as outlined in Equation (\ref{Mapping}). From now on, we mimic state-specific ambiguity sets by constructing them based on the average number of $55$ observations per state. Note that while the estimated distribution is the same for all mileage levels, its worst-case realization is not. However, there are only minor differences across mileage levels, so we focus our following discussion on a bus with an odometer reading of 75,000.\\

Figure \ref{Worst-case transition probabilities} shows the transition probabilities for different sizes of the ambiguity set. We vary the confidence level for the whole number of observations $(N_x = 55)$ on the left, while on the right, the level of confidence remains fixed $(\omega=0.95)$, and we cut the number of observations roughly in half. The larger the ambiguity set, the more probability is attached to higher mileage utilization, resulting in higher costs overall. For example, while the probability of mileage increases of 10,000 or more is an infrequent occurrence in the data, its probability increases first to 1.7\%. It then doubles to 2.5\% as we increase the confidence level. When only about half the data is available, this probability increases even further to 3.2\%.\\

\begin{figure}[h!]\centering
\subfloat[Variation in $\omega$, $(N_x = 55)$]{\scalebox{0.50}{\includegraphics{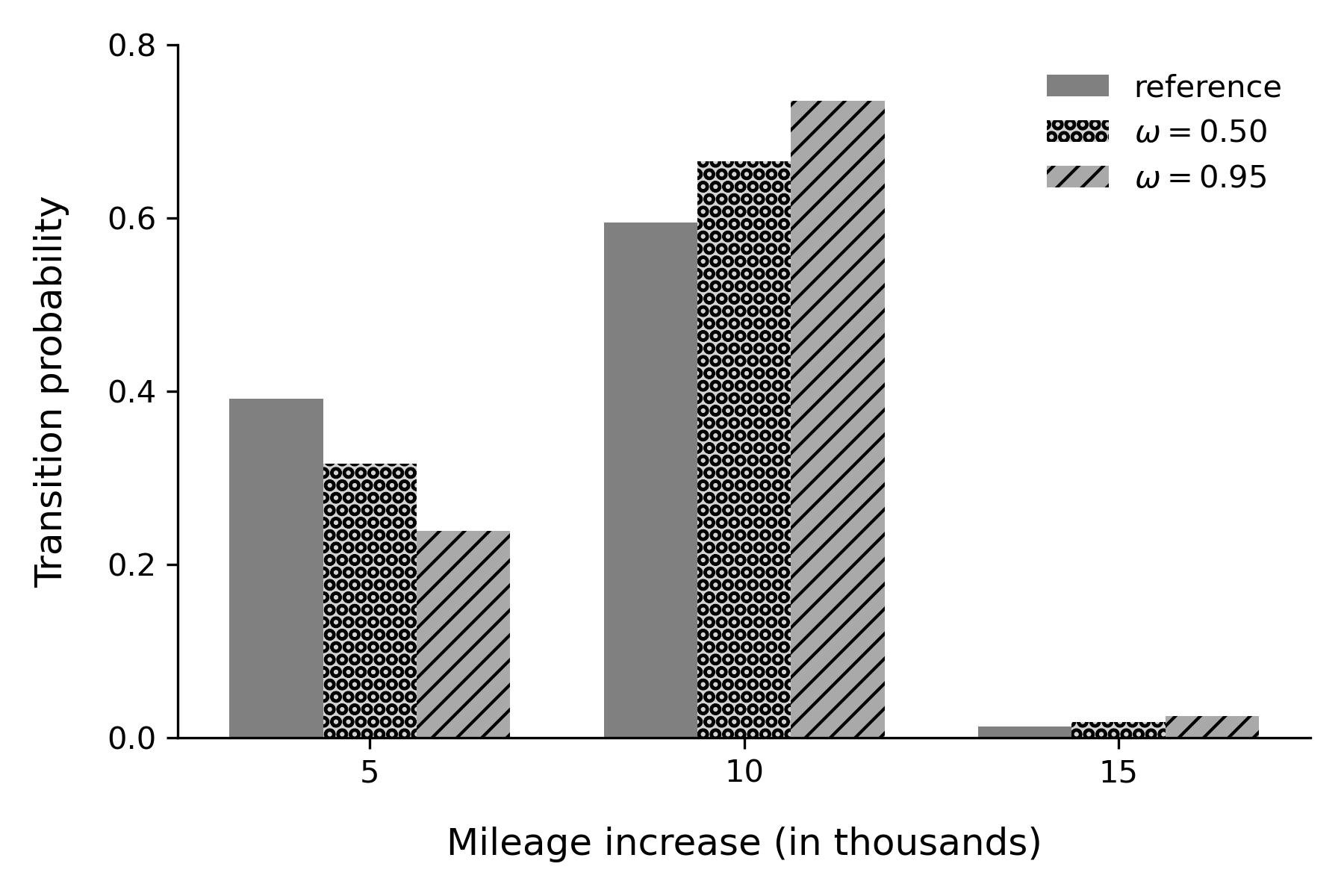}}}\hspace{0.3cm}
\subfloat[Variation in $N_x$, $(\omega = 0.95)$]{\scalebox{0.50}{\includegraphics{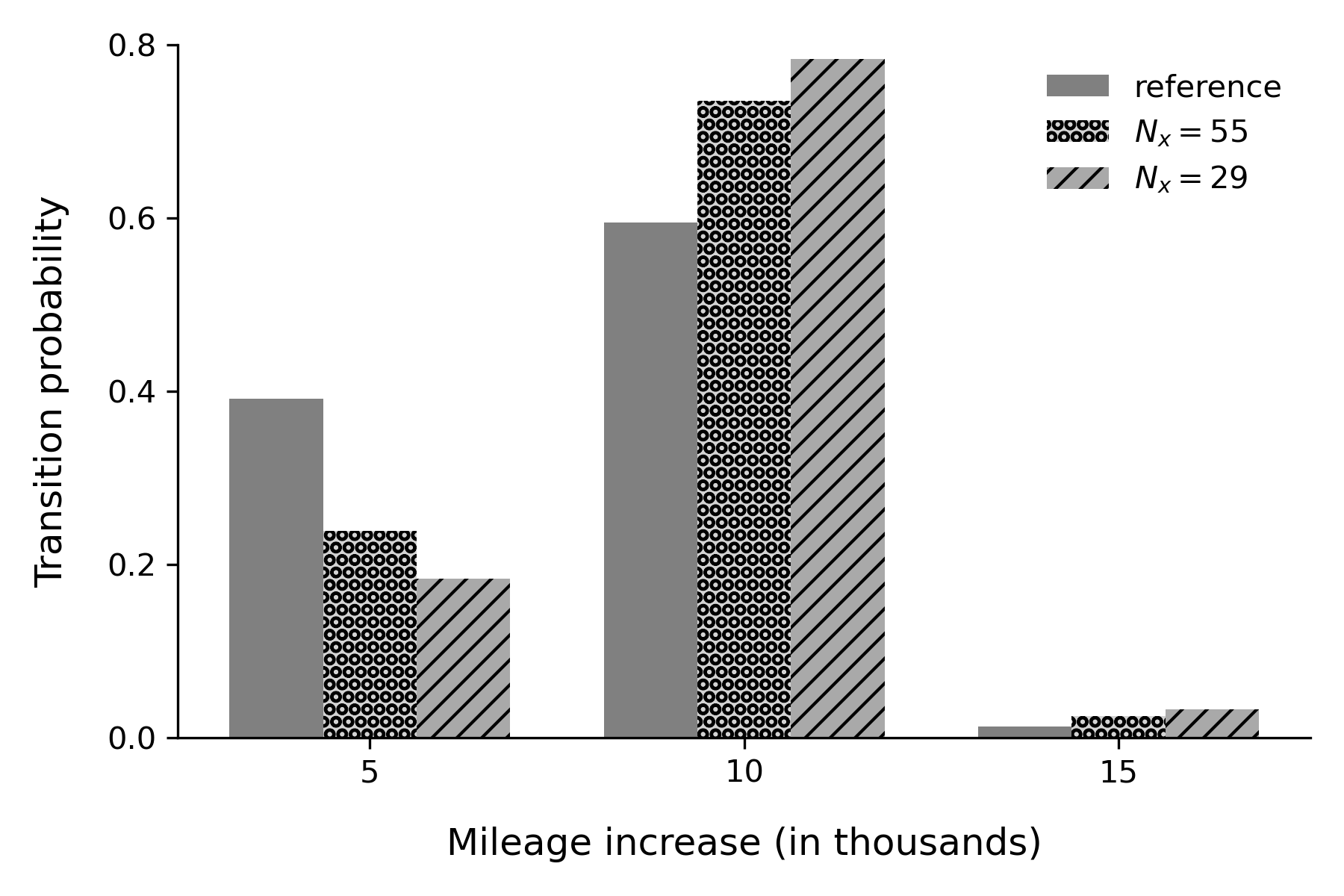}}}
\caption{Worst-case transition probabilities}\label{Worst-case transition probabilities}
\end{figure}

The decision-maker chooses whether to perform regular maintenance work on a bus or replace its complete engine each month. The assumed transition probabilities correspond to their worst-case transitions within the ambiguity set. As a result, any differences between the as-if and worst-case distributions translate into different maintenance decisions.\\

Figure \ref{Maintenance probabilities} shows the maintenance probabilities for different levels of accumulated mileage and alternative rules. Overall, the maintenance probability decreases with accumulated mileage, as maintenance becomes more costly than an engine replacement. Robust rules result in a higher probability of maintenance compared to the as-if decision rule. Under the worst-case transitions, a bus is more likely to experience higher usage during the period. As the cost of maintenance is determined by the mileage level at the beginning of the period, maintenance becomes more attractive. For example, again considering a bus with 75,000 miles, the as-if maintenance probability is 25\%, while it is 33\% $(\omega=0.50)$ and 43\% $(\omega=0.95)$ following the robust rule.\\

\begin{figure}[h!]\centering
\scalebox{0.75}{\includegraphics{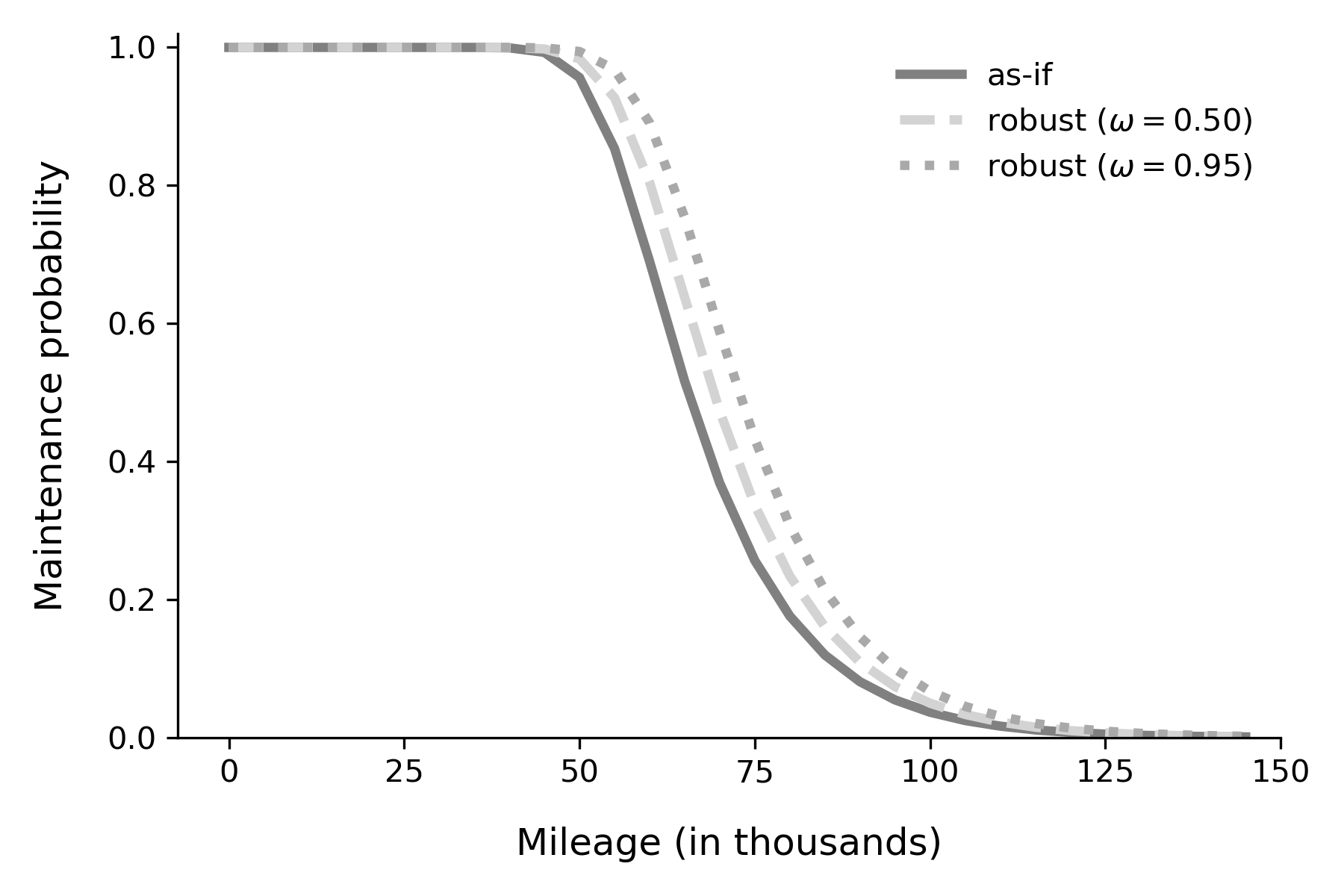}}
\caption{Maintenance probabilities}\label{Maintenance probabilities}
\end{figure}\FloatBarrier

To gain further insights into the differences between the as-if and robust decisions, we simulate a fleet of 1,000 buses for 100,000 months under the alternative decision rules.\\

Figure \ref{Single bus for alternative decision rules} shows the level of accumulated mileage over time for a single bus under different decision rules. It clarifies our simulation setup, where we apply different decision rules to the same bus. The realizations of observed transitions and unobserved signs of wear and tear remain the same. The bus accumulates more and more mileage until Harold Zurcher replaces the complete engine and the odometer is reset to zero. The first replacement happens after 20 months at 60,000 miles following the as-if decision rule, while it is delayed for another four months under the robust alternative $(\omega = 0.95)$. As its timing differs, the odometer readings will start to diverge after 20 months, even though monthly utilization remains the same.\\

\begin{figure}[h!]\centering
\scalebox{0.75}{\includegraphics{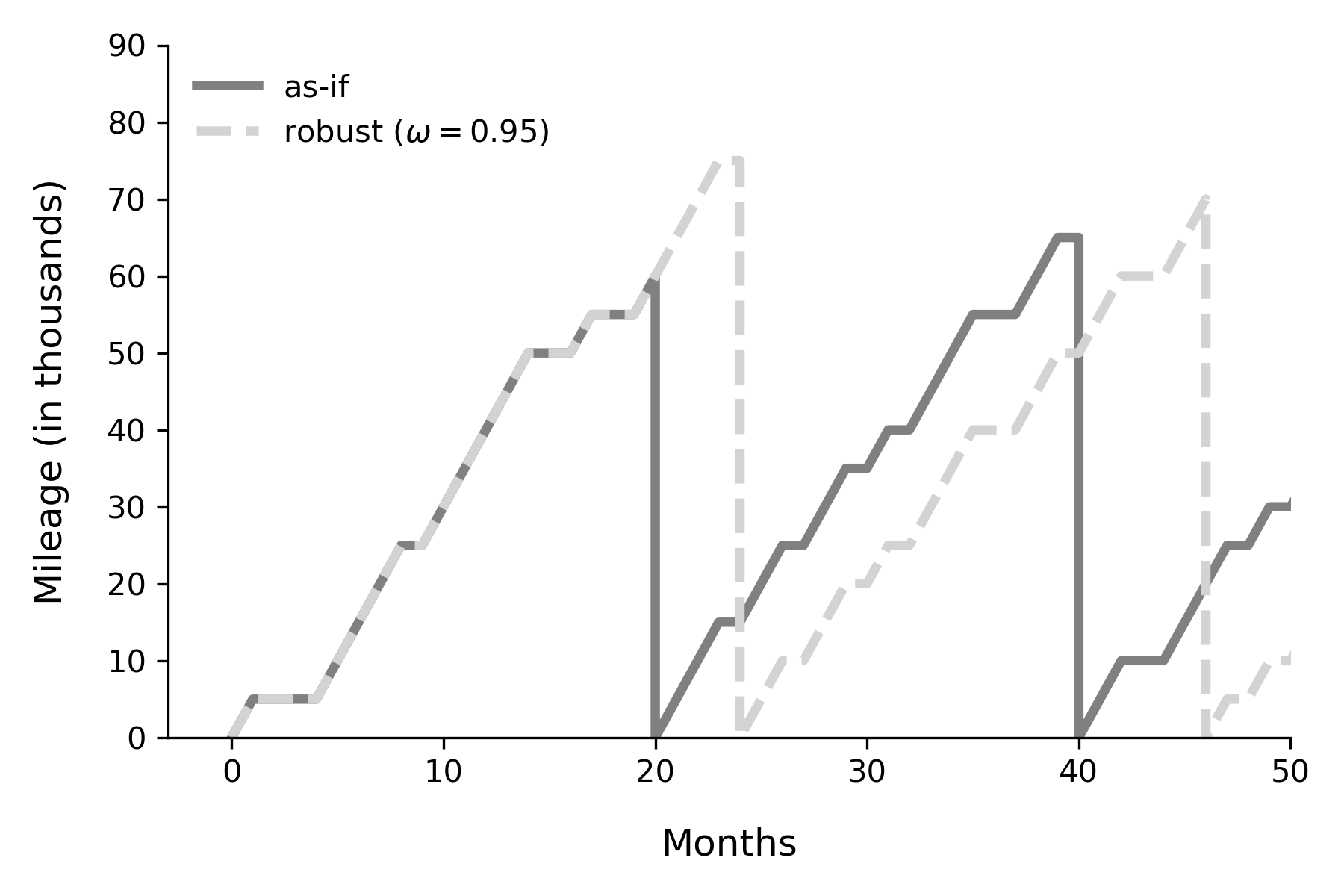}}
\caption{Single bus under alternative decision rules}\label{Single bus for alternative decision rules}
\end{figure}\FloatBarrier

We now evaluate the as-if and robust decisions at the boundary of the ambiguity set. We measure the performance of the alternative decision rules based on their total discounted utility under different assumed and actual mileage transitions.\\

Figure \ref{Performance of as-if decision rule} shows the performance of the as-if decision rule over time when the worst-case distribution for a confidence level of 0.95 governs the actual transitions. It illustrates the sensitivity of the as-if decision rule to perturbations in the transition probabilities. The solid line corresponds to its expected long-run performance without misspecification of the decision problem, while the dashed line indicates its observed performance. After about 20,000 months, it accumulates the expected long-run average cost and performs about 14\% worse overall.\\

\begin{figure}[h!]\centering
\scalebox{0.75}{\includegraphics{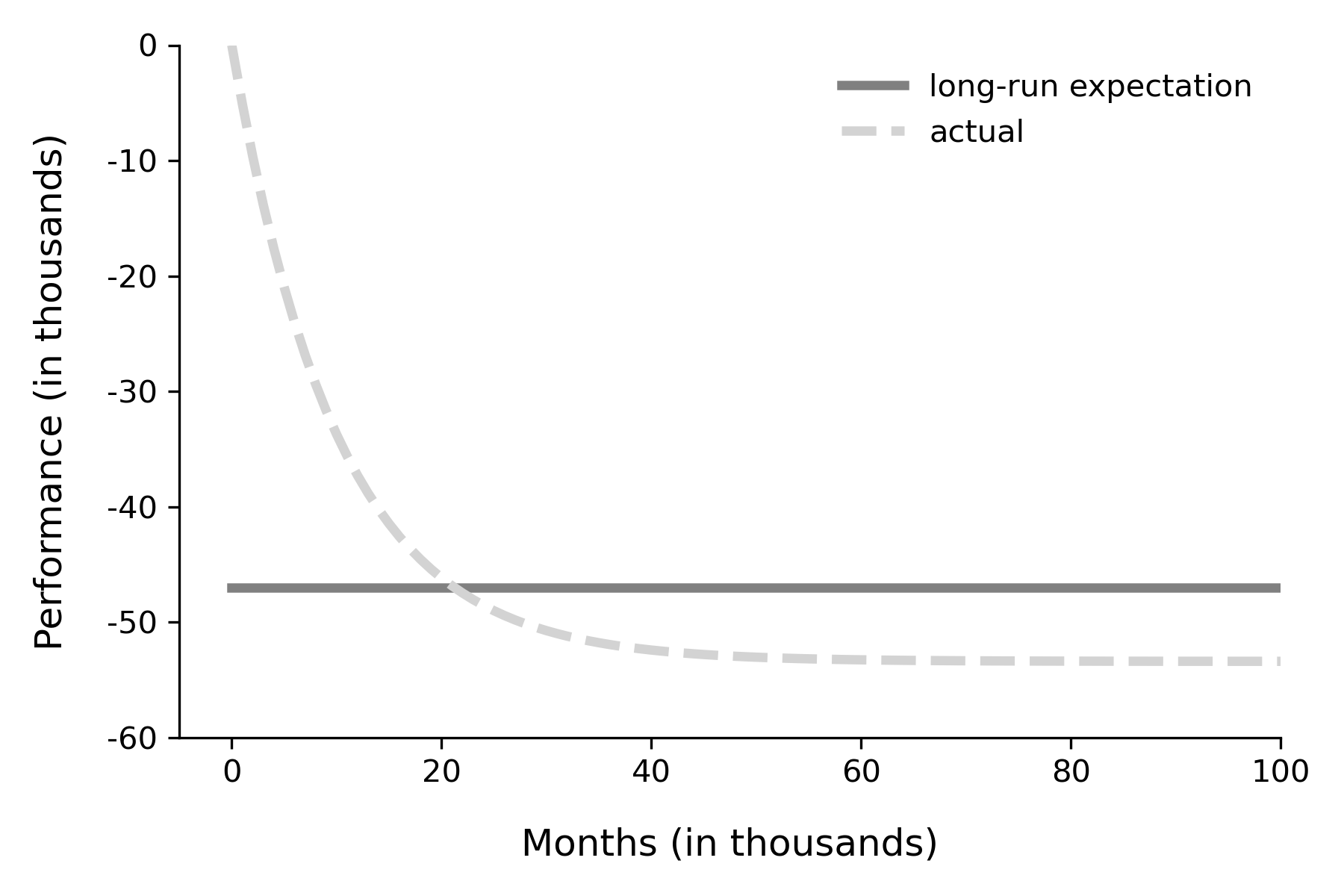}}
\caption{Performance of as-if decision rule}\label{Performance of as-if decision rule}
\end{figure}\FloatBarrier

Figure \ref{Performance and misspecification} shows the average difference in performance between the as-if and two robust decision rules with confidence levels of $0.50$ and $0.95$, respectively. The actual transitions follow the worst-case distribution with varying $\omega$. A positive value indicates that the robust decision rule outperforms the as-if decision rule. In the absence of any misspecification, the as-if decision rule must defeat any other decision rule. The same is true for the robust decision rule when the actual transitions are governed by the same $\omega$ used for their construction. Nevertheless, the as-if decision rule continues to outperform both robust decisions for moderate levels of $\omega$. For worst-case distributions with $\omega$ larger than 0.2, the first robust decision rule $(\omega=0.5)$ starts to beat the as-if decision rule. For the other robust decision rule $(\omega=0.95)$, the same is true for worst-case transitions of $\omega$ equal to 0.5.\\

\begin{figure}[h!]\centering
\scalebox{0.75}{\includegraphics{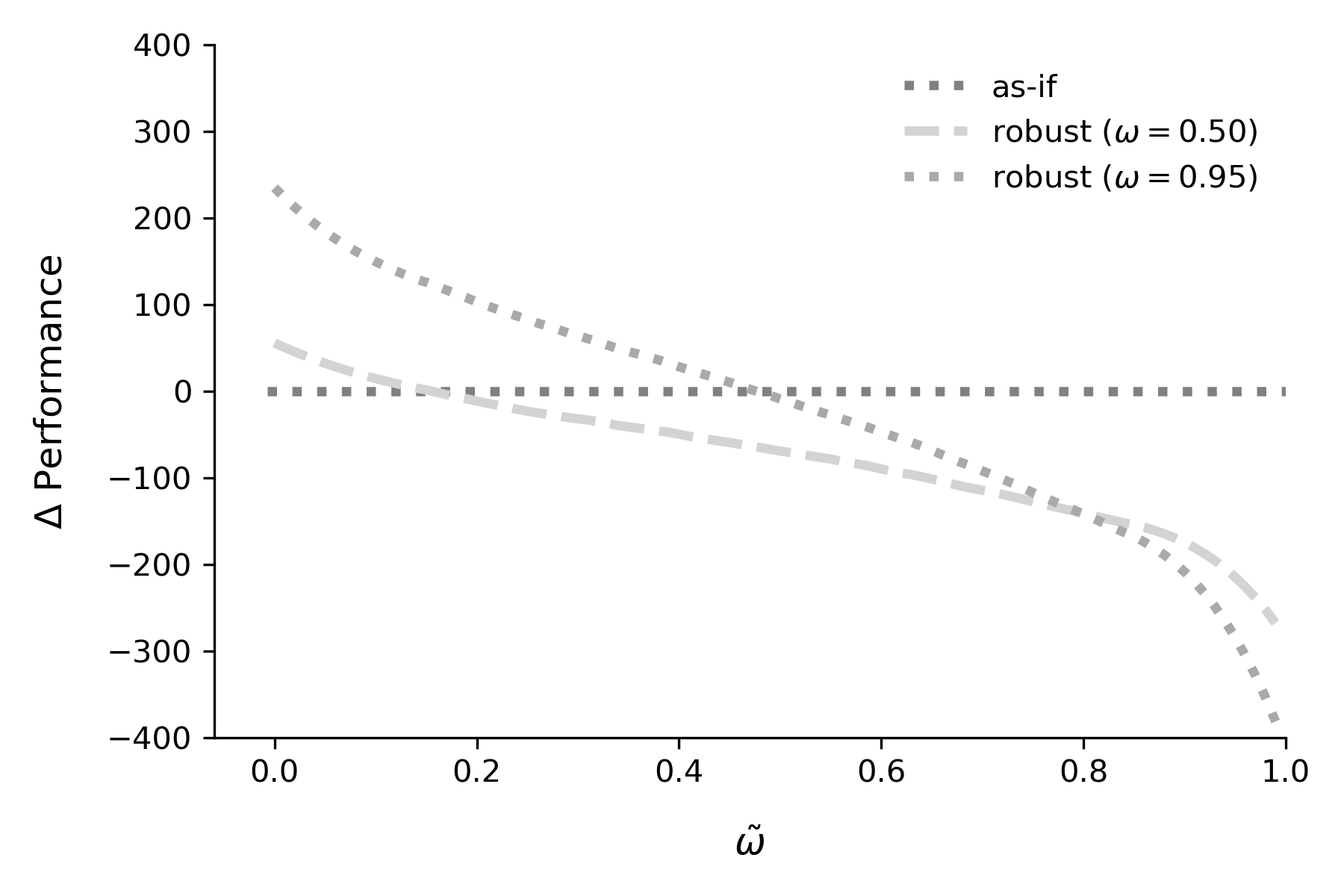}}
\begin{center}
\begin{minipage}[t]{0.80\columnwidth}\vspace{-0.75cm}
\item \scriptsize{\textbf{Notes:} We apply a Savitzky-Golay filter \citep{Savitzky.1964} to smooth the simulation results.}
\end{minipage}
\caption{Performance and misspecification}\label{Performance and misspecification}
\end{center}
\end{figure}\FloatBarrier

% !TEX root = ../../main.tex
%---------------------------------------------------------------------------------------------------
\subsection{Ex-ante analysis}
%---------------------------------------------------------------------------------------------------
We now turn to the situation before any data are realized. We evaluate the ex-ante performance of as-if and robust decision functions over the whole probability simplex and determine the optimal level of robustness.\\

We operationalize our analysis as follows. In line with \citetalias{Rust.1987} assumption on the distribution of the mileage utilization, we specify a uniform grid with $0.1$ increments over the interior of the two-dimensional probability simplex $\mathring{\Delta}_3$. At each grid point, we draw 100 samples of 55 random mileage utilizations. For each sample, we solve several robust decision functions for a grid of $\omega = \{0.0, 0.1, \hdots, 1.0\}$ using the estimated transition probabilities. Note that the uncertainties are coupled across states, as the same underlying probability creates the sample of bus utilizations. Thus, the rectangularity assumption does not reflect the economic environment. However, we still impose it when constructing the robust decision functions to ensure tractability. We then simulate the implied decision rules' actual performance and compute their expected performance by averaging across the 100 runs for each grid point. Using this information, we measure the performance of the different decisions based on the maximin criterion, the minimax regret rule, and the subjective Bayes approach using a uniform prior.\\

In Figure \ref{Relative performance} we illustrate the differences in expected performance between a robust decision function $(\omega=0.1)$ and the as-if alternative over the probability simplex.
\begin{figure}[h!]\centering
\scalebox{0.75}{\includegraphics{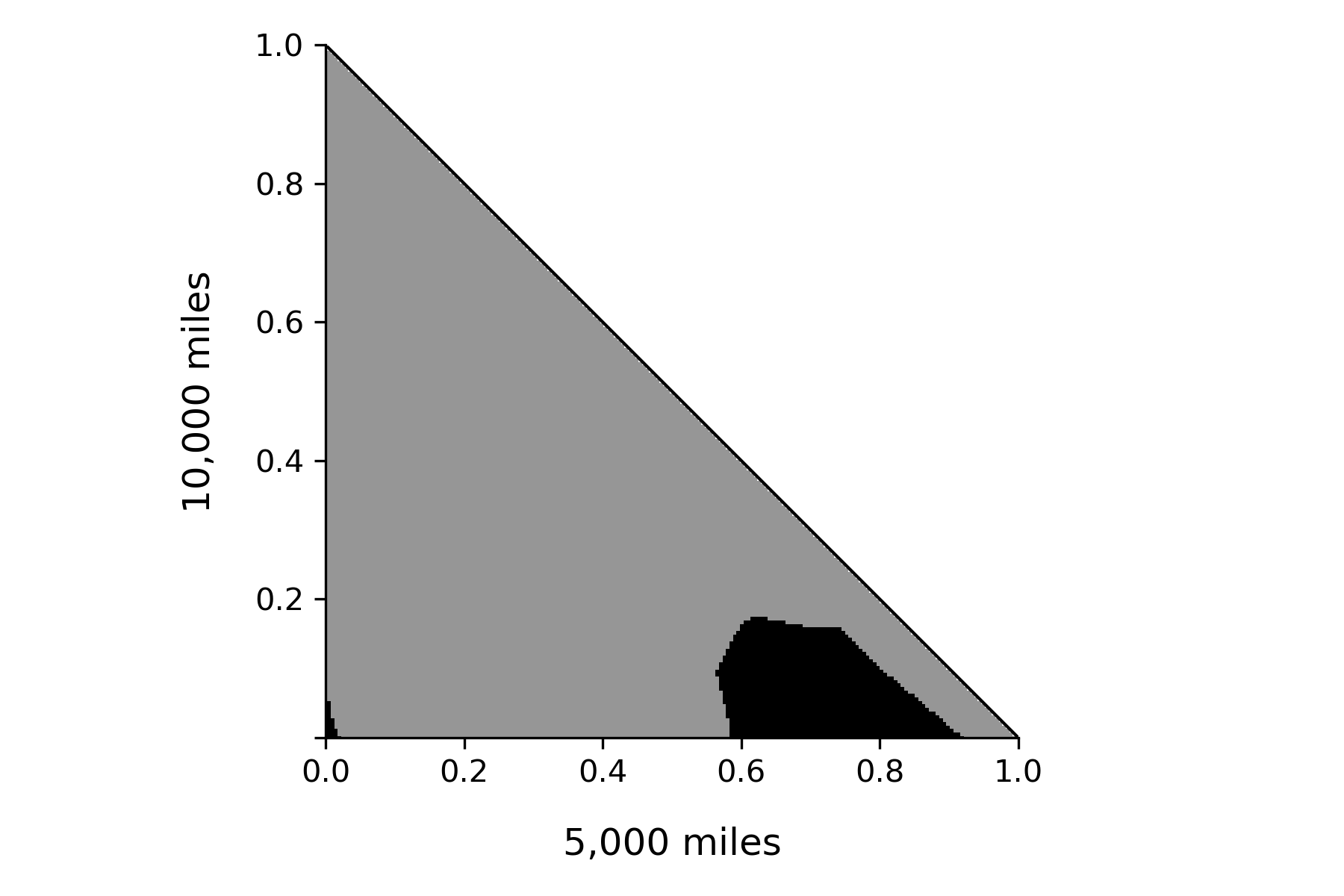}}
\caption{Relative performance of decision rules}\label{Relative performance}
\end{figure}\FloatBarrier
In the gray areas, the as-if decisions outperform the robust alternative based on their expected performance. The opposite is true for the black areas: robust decisions perform very well when the true probability of mileage increases of $5,000$ per month is high and when the true probability of increases amounting to $10,000$ is low. Otherwise, the as-if decisions outperform the robust alternative. Thus, no rule dominates the other, and it is essential to aggregate the performance over the whole probability simplex using decision theory before settling on a decision rule.\\

Figure \ref{Decision rules ranking} ranks the as-if decisions against selected robust alternatives for the different performance criteria.
\begin{figure}[h!]\centering
\scalebox{0.75}{\includegraphics{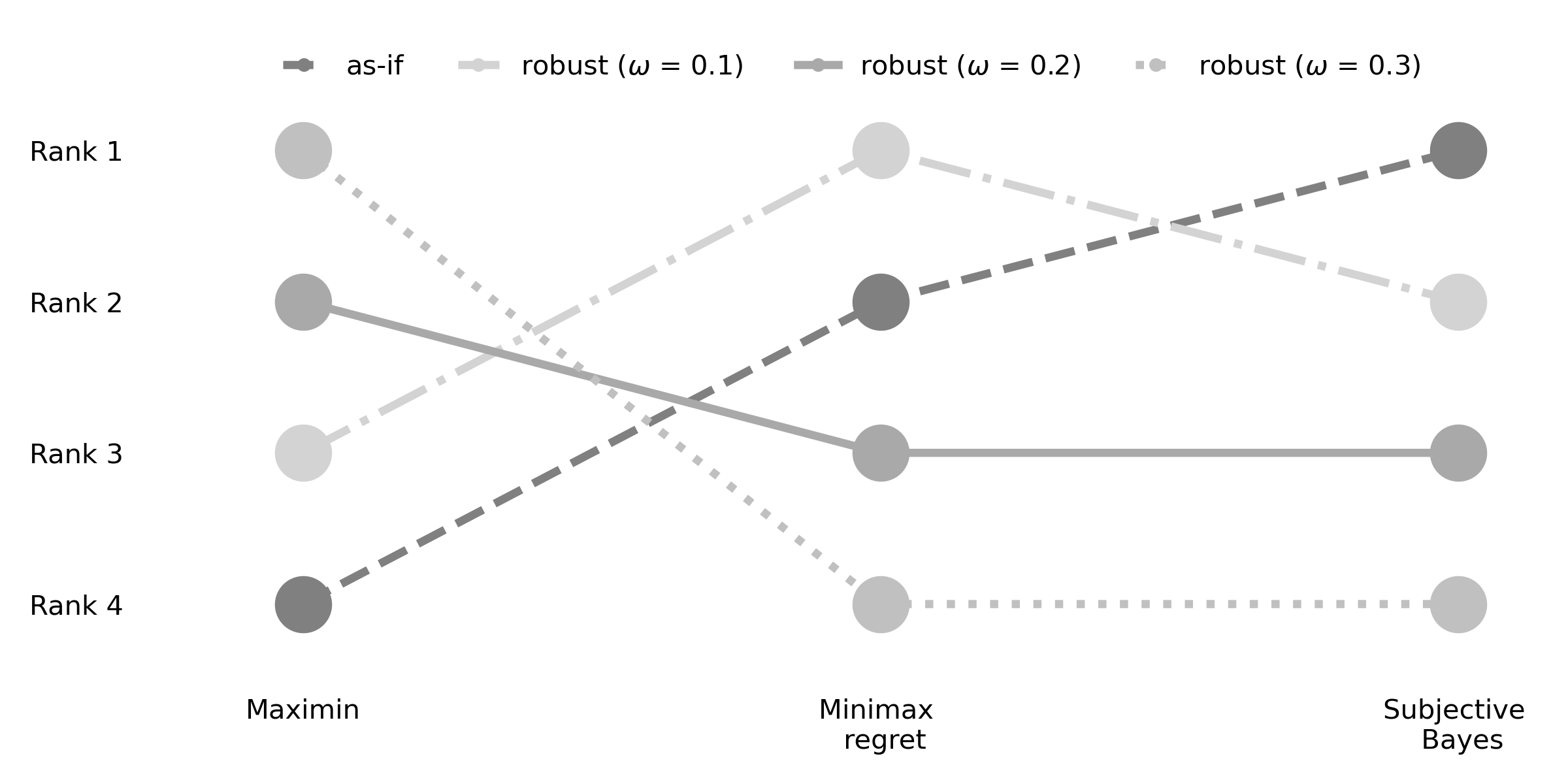}}
\caption{Ranking of decision rules}\label{Decision rules ranking}
\end{figure}\FloatBarrier
Based on a maximin criterion, decision functions rank higher when the confidence level $\omega$ used to construct them is greater. The decision function with $\omega = 0.3$ comes in first, while as-if decisions rank last. Thus, decision-makers can improve their worst-case outcomes by adopting a robust decision function. However, this comes at a cost, as indicated by the improved rankings for the as-if decision function as we move to different criteria. As-if decisions move to second place for minimax regret. The as-if decision rule comes in first when we aggregate performance across all states using a subjective Bayes approach with a uniform prior. Thus, our approach clarifies the trade-offs involved when choosing a particular decision function for decision-making.\\

We now determine the optimal size of the ambiguity set $\omega^*$ for each decision-theoretic criterium. Figure \ref{Optimality of decision rules} shows the minimum performance of the decision functions for varying levels of $\omega$ normalized between zero and one. Among all decision functions, robust decisions with $\omega=0.36$ have the highest minimum performance. They thus strike a balance between the conservatism of the worst-case approach and the protection against unfavorable transition probabilities. Based on the maximin criterion, the as-if decision function performs worst.

\begin{figure}[h!]\centering
\scalebox{0.75}{\includegraphics{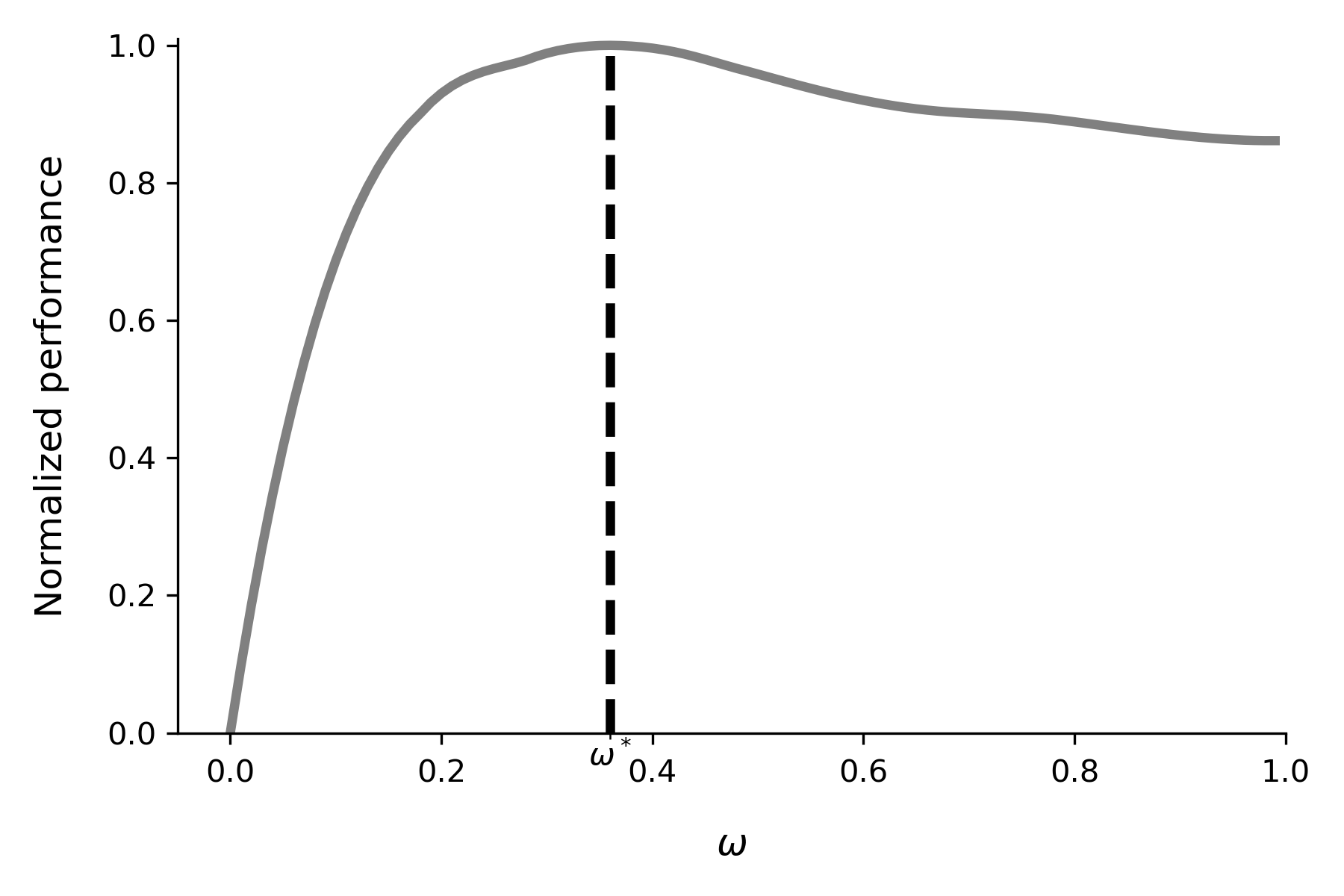}}
\caption{Optimality of decision rules}\label{Optimality of decision rules}
\end{figure}\FloatBarrier

The minimax regret criterion leads to a slightly reduced level of $\omega^*=0.1$. As-if decisions are optimal based on the subjective Bayes criterion with a uniform prior.

% !TEX root = ../main.tex
%---------------------------------------------------------------------------------------------------
\FloatBarrier\section{Conclusion}\label{Conclusion}
%---------------------------------------------------------------------------------------------------
Economists often estimate economic models on data and use the point estimates as a stand-in for the truth when studying the model's implications for optimal decision-making. This practice ignores model ambiguity, exposes the decision problem to misspecification, and ultimately leads to post-decision disappointment. We develop a framework to explore, evaluate, and optimize robust decision rules that explicitly account for the uncertainty in the estimation using statistical decision theory. We show how to operationalize our analysis by studying robust decisions in a stochastic dynamic investment model in which a decision-maker directly accounts for uncertainty in the model's transition dynamics.\\

As our core contribution, we combine ideas from data-driven robustness optimization \citep{Bertsimas.2018}, robust Markov decision processes \citep{Ben-Tal.2009}, and statistical decision theory \citep{Berger.2010} to optimize robustness in decision-making. This insight transfers directly to many other settings. For example, the COVID-19 pandemic provides a timely example of economists informing policy-making by using highly parameterized models in light of ubiquitous uncertainties \citep{Avery.2020}. When analyzing these models, economists treat many of their parameters as if they are known. However, their actual values are uncertain, as they are often estimated based on external data sources. Using statistical decision theory, our research illustrates how to conduct robust policy-making and to evaluate its relative performance against policies that ignore uncertainty. Such an approach promotes a sound decision-making process, as it provides decision-makers with the tools to systematically navigate the uncertainties they face \citep{Berger.2021}.

\newpage\bibliographystyle{apalike}

\newpage
\begin{appendices}\setcounter{page}{1}
\renewcommand{\thepage}{\thesection-\arabic{page}}
\newpage
\renewcommand*{\thepage}{A-\arabic{page}}
\begin{appendices}

%-------------------------------------------------------------------------------
\section{The robust contraction mapping}\label{The robust contraction mapping}
%-------------------------------------------------------------------------------
\cite{Rust.1987} shows that the expectation of the next period's value function is a fixed point on the mileage states $x$ only. He uses the regenerative property of the mileage process and introduces a separate notion $\tilde{EV}(x)$ for the expected value function of maintenance. $\tilde{EV}(x)$ is the fixed point of the contraction mapping defined as follows: For all $v \in \mathbb{V}$

\begin{equation}
  \tilde{\Lambda}(v)(x) = \sum_{x' \in X} p(x'|x) \log \sum_{a \in \{0, 1\}} \exp\bigl( u(x' , a) + \delta \thin v((1 - a) \thin x') \bigr), \, x\in X.
\end{equation}

Following \cite{Iyengar.2005}, we adopt a similar approach and show:

\begin{Theorem}\label{robust_operator_rust_thm} Let the robust Bellman operator $\Lambda:\mathbb{V}\rightarrow\mathbb{V}$ be defined as follows: For all $v \in \mathbb{V}$
\begin{align}
  \Lambda(v)(x) &= \int \max_{a \in \{0,1\}} \biggl[u(x, a) + \epsilon(a) + \delta\thin \underset{p \in \mathcal{P}((1 - a) \thin x, \thin \omega)}{\min} \sum_{x^\prime \in X} p(x^\prime) v(x^\prime) \biggr] q(d\epsilon)\nonumber \\
  \label{robust_zurcher_operator}
  &= \log \sum_{a \in \{0, 1\}} \exp \biggl[ u(x , a) + \delta \underset{p \in \mathcal{P}((1 - a) \thin x, \thin \omega)}{\min} \sum_{x^\prime \in X} p(x^\prime) v(x^\prime) \biggr].
\end{align}

Then $\Lambda(\cdot)$ is a contraction mapping on $\big(\thin\mathbb{V},\thin \left\| \cdot \right\|_\infty\big)$ with unique fixed point $EV$. \\
\end{Theorem}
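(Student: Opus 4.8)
The plan is to verify that $\Lambda$ satisfies Blackwell's sufficient conditions for a contraction, namely monotonicity and discounting, on the space $\mathbb{V}$ of bounded real-valued functions on $X$ equipped with the supremum norm. First I would establish that $\Lambda$ maps $\mathbb{V}$ into itself: since $X$ is finite and each inner minimization over $\mathcal{P}((1-a)x,\omega)$ is over a compact set contained in the probability simplex, the term $\min_{p}\sum_{x'} p(x') v(x')$ is bounded whenever $v$ is bounded, and the $\log$-$\sum$-$\exp$ aggregation of the two finitely many actions preserves boundedness; so $\Lambda(v)\in\mathbb{V}$. I would also note the equivalence of the two displayed lines in \eqref{robust_zurcher_operator}, which is exactly the McFadden closed-form integration of the extreme-value shocks $\epsilon(a)$ used in the as-if case and already invoked in the main text.

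Next I would prove monotonicity: if $v \le w$ pointwise, then for each fixed $a$ and each fixed $p$ in the (action-dependent but $v$-independent) ambiguity set, $\sum_{x'} p(x') v(x') \le \sum_{x'} p(x') w(x')$; taking the minimum over $p$ on both sides preserves the inequality (the min is a monotone operation), then adding $u(x,a)$, multiplying by $\delta>0$, exponentiating, summing over $a$, and taking $\log$ are all monotone, so $\Lambda(v)(x)\le\Lambda(w)(x)$ for all $x$. Then I would prove discounting: for a constant $c\ge 0$, $\min_{p}\sum_{x'} p(x')(v(x')+c) = c + \min_{p}\sum_{x'}p(x')v(x')$ because $\sum_{x'} p(x') = 1$ for every $p$ in the simplex; pushing this through the $\log$-$\sum$-$\exp$ gives $\Lambda(v+c)(x) = \log\sum_a \exp[u(x,a)+\delta(\min_p\sum p v + c)] = \delta c + \Lambda(v)(x)$, so $\Lambda(v+c) = \Lambda(v) + \delta c$ with modulus $\delta<1$. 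By Blackwell's theorem $\Lambda$ is a contraction of modulus $\delta$ on $(\mathbb{V},\|\cdot\|_\infty)$, and since this space is complete, the Banach fixed point theorem delivers a unique fixed point, which we name $EV$.

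The step I expect to require the most care is not monotonicity or discounting themselves — those are short — but rather the preliminary structural claims that make the argument legitimate in the \emph{robust} setting: specifically, that the inner worst-case distribution is chosen separately and freely for each action $a$ (this is where rectangularity, Assumption \ref{Rectangularity}, is doing real work), and that the ambiguity set $\mathcal{P}((1-a)x,\omega)$ does not depend on $v$, so that the $\min_p$ operation inherits monotonicity rather than merely being a pointwise infimum of monotone maps. I would also want to be explicit that the regenerative structure — replacement resets mileage to zero, so the post-decision state is $(1-a)x$ — is what lets the fixed point live on the reduced mileage space $X$ rather than the full state space, mirroring \citet{Rust.1988} but now with the ambiguity sets indexed by the reduced state. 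Once these points are in place, the contraction verification is routine and the conclusion follows from Banach's theorem.
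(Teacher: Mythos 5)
Your proof is correct, but it takes a different route from the paper's. The paper argues directly on the sup-norm: it fixes $x$, assumes without loss of generality $\Lambda(w)(x)\geq\Lambda(v)(x)$, picks for each action a near-minimizing $p^a$ within $\nu$ of the inner infimum for $v$, uses feasibility of $p^a$ to bound the $w$-side from above, invokes Rust's inequality that the difference of the two maxima at a \emph{fixed} transition law is at most $\delta\|w-v\|_\infty$, integrates out $\epsilon$, and lets $\nu\to 0$. You instead verify Blackwell's sufficient conditions (monotonicity and discounting) for the log-sum-exp form and conclude via Banach. Both deliver modulus $\delta$. Your route is shorter and makes transparent exactly which structural facts matter: that the ambiguity set is independent of $v$, that $\sum_{x'}p(x')=1$ lets constants pass through the inner minimization, and that log-sum-exp translates constants exactly; the paper's $\nu$-argument buys something you gloss over, namely that it never needs the inner minimum to be \emph{attained}. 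That matters here because the ambiguity set is defined as a Kullback--Leibler ball intersected with the \emph{open} interior $\mathring{\Delta}_{|J_x|}$ of the simplex, so it need not be compact and the ``$\min$'' may only be an infimum --- your assertion of compactness is therefore not quite right as stated, but this is harmless for your argument, since monotonicity and the identity $\inf_p\bigl[\sum_{x'}p(x')(v(x')+c)\bigr]=c+\inf_p\sum_{x'}p(x')v(x')$ hold verbatim for infima. Your closing remarks on rectangularity and the regenerative reduction to the mileage space correctly identify why the operator has this one-dimensional form in the first place, which the paper treats in the surrounding text rather than inside the proof.
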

\begin{proof}
Let $v, w \in \mathbb{V}$ be arbitrary. Fix $x \in X$ and assume without loss of generality that $\Lambda(w)(x) \geq \Lambda(v)(x)$. Let $\nu >0$ be arbitrary. Then choose $p^a \in \mathcal{P}((1 - a) x, \omega)$, such that
\begin{align*}
&\max_{a\in\{0, 1\}}[u(x, a) + \epsilon(a) + \delta \underset{p \in \mathcal{P}((1 - a) \thin x, \thin \omega)}{\min} \sum_{x^\prime \in X} p(x^\prime) v(x^\prime)] \geq \\  &\max_{a\in\{0, 1\}}[u(x, a) + \epsilon(a) + \delta \sum_{x^\prime \in X} p^a(x^\prime) v(x^\prime)] - \nu.
\end{align*}
By construction:
\begin{align*}
&\max_{a\in\{0, 1\}}[u(x, a) + \epsilon(a) + \delta \underset{p \in \mathcal{P}((1 - a) \thin x, \thin \omega)}{\min} \sum_{x^\prime \in X} p(x^\prime) w(x^\prime)] \leq \\  &\max_{a\in\{0, 1\}}[u(x, a) + \epsilon(a) + \delta \sum_{x^\prime \in X} p^a(x^\prime) w(x^\prime)].
\end{align*}
\cite{Rust.1988} shows for any conditional distribution measure $p$ and mileage state $x\in X$:
\begin{align*}
&\max_{a\in\{0, 1\}}[u(x, a) + \epsilon(a) + \delta \sum_{x^\prime \in X} p(x^\prime) w(x^\prime)] - \max_{a\in\{0, 1\}}[u(x, a) + \epsilon(a) + \delta \sum_{x^\prime \in X} p(x^\prime) v(x^\prime)] \\
\leq & \thin\delta \max_{a\in\{0, 1\}} |\sum_{x^\prime \in X} p(x^\prime) (w(x^\prime) - v(x^\prime)| \leq \delta \thin \left\|w - v\right\|_\infty.
\end{align*}
This holds in particular for $p^a$, which yields:
\begin{align*}
  0 &\leq \Lambda(w)(x) - \Lambda(v)(x) \\
  &\leq \int \biggl(\max_{a\in\{0, 1\}}[u(x, a) + \epsilon(a) + \delta \sum_{x^\prime \in X} p^a(x^\prime) w(x^\prime)] \\
  &\qquad\quad - \max_{a\in\{0, 1\}}[u(x, a) + \epsilon(a) + \delta \sum_{x^\prime \in X} p^a(x^\prime) v(x^\prime)] + \nu \biggr) q(d\epsilon)\\
  &\leq \int (\delta \thin\left\|w - v\right\|_\infty + \nu)\thin q(\epsilon)\\
  &= \delta \thin \left\|w - v\right\|_\infty + \nu.
\end{align*}

Arguing vice versa for $\Lambda(w)(x) \leq \Lambda(v)(x)$, this implies that
\begin{equation*}
\left\| \Lambda(w) - \Lambda(v)\right\|_\infty \leq \delta \left\|w - v\right\|_\infty + \nu.
\end{equation*}
With $\nu$ arbitrary and $\delta \in [0,\thin 1 )$ this shows that $\Lambda$ is a contraction mapping on $\mathbb{V}$ with respect to $\left\|\cdot\right\|_\infty$. As $\big(\thin\mathbb{V},\thin \left\| \cdot \right\|_\infty\big)$ is a Banach space, the result is established.
\end{proof}

\end{appendices}

\newpage\FloatBarrier\end{appendices}

\end{document}